\def\BState{\State\hskip-\ALG@thistlm}
\newtcolorbox{highlighted}{colback=yellow,coltext=red,breakable}
\newtheorem{prop}{Proposition}
\newtheorem{rem}{Remark}
\newtheorem{defn}{Definition}
\algnewcommand\And{\textbf{and }}
\algnewcommand\Or{\textbf{or }}
\journal{}
\begin{document}
	
	\begin{frontmatter}
		
		
		\title{An algorithm for integrating peer-to-peer ridesharing and schedule-based transit system for first mile/last mile access}
		

		
		\author[pk]{Pramesh Kumar}
		\author[pk]{Alireza Khani*}
		\address[pk]{Department of Civil, Environmental and Geo-Engineering, University of Minnesota, Twin Cities}

		\begin{abstract}
			
        Due to limited transit network coverage and infrequent service, suburban commuters often face the transit first mile/last mile (FMLM) problem. To deal with this, they either drive to a park-and-ride location to take transit, use carpooling, or drive directly to their destination to avoid inconvenience. Ridesharing, an emerging mode of transportation, can solve the transit first mile/last mile problem. In this setup, a driver can drive a ride-seeker to a transit station, from where the rider can take transit to her respective destination. The problem requires solving a ridesharing matching problem with the routing of riders in a multimodal transportation network. We develop a transit-based ridesharing matching algorithm to solve this problem. The method leverages the schedule-based transit shortest path to generate feasible matches and then solves a matching optimization program to find an optimal match between riders and drivers. The proposed method not only assigns an optimal driver to the rider but also assigns an optimal transit stop and a transit vehicle trip departing from that stop for the rest of the rider’s itinerary. We also introduce the application of space-time prism (STP) (the geographical area which can be reached by a traveler given the time constraints) in the context of ridesharing to reduce the computational time by reducing the network search. An algorithm to solve this problem dynamically using a rolling horizon approach is also presented. We use simulated data obtained from the activity-based travel demand model of Twin Cities, MN to show that the transit-based ridesharing can solve the FMLM problem and save a significant number of vehicle-hours spent in the system. 
			
		\end{abstract}
		
		\begin{keyword}
			transit \sep ridesharing \sep transit first mile last mile \sep schedule-based transit shortest path \sep rolling horizon \sep matching optimization \sep space-time prism (STP)
			
		\end{keyword}
				
		\cortext[mycorrespondingauthor]{Corresponding author}
		\fntext[myfootnotetel]{Email: akhani@umn.edu}
		\fntext[myfootnotetel]{Tel: (612) 624-4411}
		\fntext[myfootnotetel]{Web: \href{http://umntransit.weebly.com/}{http://umntransit.weebly.com/}}

	\end{frontmatter}
	\newpage
	\section{Introduction}\label{sec:intro}
    Due to a rise in the number of motor vehicles resulting from increasing travel demand, urban highways are facing an inescapable condition of traffic congestion. The congestion on roads can be primarily ascribed to an increase in the use of a personal vehicle for traveling. Besides, we have observed that the vehicle capacity is often under-utilized, e.g., National Household Travel Survey 2009 shows that on an average, only 1.7 out of 4 available seats in cars were utilized (\citet{Santos2009}, \citet{Masoud2017b}). As the aim is to transport people, not cars, an alternative way to provide mobility is to make efficient use of existing transportation infrastructure. Public transportation, which can carry multiple passengers, is widely considered as a practical solution to the congestion problem by reducing vehicle-miles traveled (VMT) on roads (\citet{Aftabuzzaman2015}). 
	
		\begin{figure}[h!]
		\centering
		\begin{tikzpicture}
		\draw[-, thick] (0, 0)--(15, 0) node[right]{};
		\draw[-, thick] (3, 0.2)--(3, -0.2) node[right]{};
		\draw[-, thick] (6, 0.2)--(6, -0.2) node[right]{};
		\draw[-, thick] (9, 0.2)--(9, -0.2) node[right]{};
		\draw[-, thick] (12, 0.2)--(12, -0.2) node[right]{};
		\node[text width=4cm] at (2.4, 0.5) {Access time};	
		\node[text width=4cm] at (2.4, -0.5) {(First mile)};	
		\node[text width=4cm] at (5.3, 0.5) {Waiting time};	
		\node[text width=4cm] at (8.2, 0.5) {In-vehicle time};
		\node[text width=4cm] at (11.3, 0.5) {Transfer time};	
		\node[text width=4cm] at (14.4, 0.5) {Egress time};	
		\node[text width=4cm] at (14.4, -0.5) {(Last mile)};	
		\end{tikzpicture}
		\caption{Components of a tansit trip}
		\label{fig:transittrip}
	\end{figure}
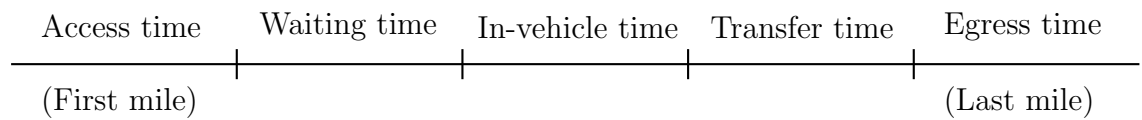
    To use public transit for travel, one has to walk from origin to a boarding stop to access the service and then walk again from alighting stop to the respective destination.  Formally, these walking components of a transit trip are known as "first mile" and "last mile," which describe the beginning and end of an individual transit trip (\figurename{\ref{fig:transittrip}}) respectively. Due to limited transit coverage, it is sometimes difficult or impossible to walk to/from transit stop on either end of a transit trip. This inaccessibility problem is also known as the first mile/last mile (FMLM) problem for transit. The problem is common among people commuting from low-density areas where transit service is not available or less frequent. The limited coverage in these areas is because of the economic in-viability of providing transit service. To encourage people living in suburban areas to use transit to commute, transportation agencies need to provide an effective transportation service that can help them to cover their first/last mile. Various possible solutions to this problem are driving to a park-and-ride facility, use para-transit services, taxi/carpooling, or bike to a transit station. Although these solutions to the FMLM problem are feasible, they are either not attractive among commuters or too expensive for daily travel, which drives travelers to use personal vehicle for their daily commute. A cheaper solution is to use the existing trips as a feeder service to transit stations. Ridesharing, which has attracted attention during recent years, can be a potential solution to the transit FMLM problem. In this program, both riders and drivers can submit a request in real-time, and an automated system will match them for sharing a ride while satisfying the spatial and temporal constraints for both.\\
	
    The ridesharing programs are becoming popular in cities (\citet{Agatz2011, Masoud2017}). However, a few studies such as \citet{Masoud2017} have raised a concern that this increase in the participation in ridesharing programs is shifted from the transit ridership. Besides, \citet{Wang2018}, using a ridesharing equilibrium model, showed that when driving is faster and more expensive than transit, then no cost-sharing strategy can sustain a ridesharing program without shifting the demand from transit mode. In that case, ridesharing, which was aimed to reduce congestion on roads, may not be as effective as anticipated, and it will become a competitor of transit mode. However, the equilibrium conditions proved in that study is a function of the matching probability and does not consider the integration of ridesharing and transit mode. A ridesharing program integrating transit with ridesharing can help in shifting the share from single-occupancy vehicles and hence increase transit ridership. In his program, ridesharing can provide a fast and reliable mobility service in low-density areas, and transit can provide mobility in high-density areas.  \\
    
    In this article, we develop a framework to integrate ridesharing with transit to solve the FMLM problem.  Riders who are looking for an affordable ride to a transit station and drivers who want to join this program to alleviate the negative impacts of car travel and to receive a small compensation can submit their request to an automated system. The framework consists of the development of an algorithm for transit passenger routing in a schedule-based transit network and a matching optimization program to match riders and drivers up to the first mile of the transit trip. Overall, this article makes the following contributions:

	\begin{enumerate}	
        \item Propose operation of a large scale transit-based ridesharing program,
	\item Development of a schedule-based transit network algorithm for finding feasible matches,
	\item Incorporating the state-of-the-art network search reduction techniques to improve the computational time without compromising the solution quality,
	\item Development of a \textit{rolling horizon} algorithm to solve the dynamic transit ridesharing problem,
	\item Use simulation to assess the efficiency of the proposed method on a large scale transit network of Twin Cities, MN    
	\end{enumerate}

    The rest of the article is structured as follows. \cref{sec:literature} reviews previous work on transit FMLM problem and ridesharing matching algorithms, which is followed by the motivation for this research. \cref{sec:prelim} introduces the terms and definitions used in this article and describes the operation of the transit-based ridesharing program. Then, \cref{sec:tbrp} formally defines the transit-based ridesharing matching problem. After that, an algorithm to find potential matches and a matching optimization model is developed in \cref{sec:sbtsp}, which is followed by the development of a rolling horizon algorithm. Then, \cref{sec:exp} presents the results of simulation experiments and \cref{sec:disc} describes ways to relax various assumptions in this study. Finally, conclusions and recommendations for future research are given in \cref{sec:conc}.

	\section{Related work and motivation for this study}\label{sec:literature}
        Previous studies have found that access to transit is a significant factor affecting its modal share (\citet{Moniruzzaman2012, Brons2009, Kalaanidhi2013}). Therefore, in areas with limited network coverage, it is crucial to provide a feeder service to attract riders to use transit service. Various modes of transport have been studied and implemented to improve access to transit mode. This includes integrating bikesharing and transit (\citet{Martin2014, Rietveld2000, martens2004bicycle}), designing a demand responsive transit feeder service to solve the FMLM problem (\citet{Wang2017a, maheo2017benders, Cayford2004, Koffman2004, Lee2017, Quadrifoglio2008, Shen2012, Li2009a}), use of park-and-ride facilities (\citet{Nassir2012, Khani2012, Khani2020}), and integrating ridesharing and transit (\citet{Masoud2017, Stiglic2018, Bian2019, Ma2019, Chen2020}). We summarize the literature related to transit-based ridesharing into three main categories.
	    
	    \begin{enumerate}
            \item \textit{Designing a feeder service to the transit mode}: The transit-based ridesharing problem resembles similarities with the literature focused on designing a feeder service to the transit mode such as designing a demand responsive connector (DRC) (\citet{Cayford2004, Koffman2004, Lee2017, Quadrifoglio2008, Shen2012, Li2009a}) or dial-a-ride service with transfers (DART) (\citet{Masson2014, Deleplanque2013, Jafari2016, Hall2009}). \citet{Cayford2004} designed a demand-responsive dial-a-ride service in California.  The service was designed to be fixed-route during high demand and flexible-route during low demand period. A transit cooperative research program (TCRP) report also presents experiences from flexible transit services in various cities in the United States (\citet{Koffman2004}). While designing vehicle-scheduling for a demand responsive system, many studies consider zoning in which a vehicle would operate within the zonal boundary to pick up the passengers and drop them off at a predetermined transit station (\cite{Quadrifoglio2008, Shen2012, Li2009a}). \citet{Maheo2015} proposed the design of a hub and shuttle public transit system in Canberra. \citet{Wang2017a}  proposed a mixed-integer linear program and heuristic techniques to design routing and scheduling for a last-mile transportation service. However, the transit-based ridesharing problem we consider in this article, is different from the ones discussed above in the following aspects. First, in ridesharing, the drivers are not hired to work for a private entity, so the departure and arrival time constraints of both drivers and riders need to be incorporated in the model. Second, we cannot assume an unlimited supply of vehicles to serve rider trips (\cite{Lee2017}). Third, ridesharing problem is highly dynamic in nature, where drivers and riders enter the system on short notice, it is difficult to solve a possible discrete optimization model proposed in the studies cited above in real-time.

	    	\item \textit{Ridesharing matching algorithms}: The current study proposes a matching algorithm for peer-to-peer (P2P) ridesharing as an alternative mode of transportation to access transit. The ridesharing matching problem has recently got the attention of many researchers. \citet{Agatz2012} and \citet{Furuhata2013a} reviewed various forms of ridesharing and outlined the difficulties that arise when developing a mechanism for such a system. \citet{Agatz2011} developed a simple optimization model for matching single driver and rider while satisfying their time budget constraints and keeping the driver or rider role flexible. \citet{Stiglic2015} pointed out the benefits of meeting points in ridesharing. \citet{Santi2014} proposed the idea of a shareability network to reduce the size of the matching problem. As the method was only able to match two riders (optimally) and three riders (heuristically), \citet{Alonso-mora2017} improved this idea by creating a request-to-vehicle (RV) and a request-to-trip-to-vehicle (RTV) graph to solve a high capacity ridesharing problem. \citet{Masoud2017b, Masoud2017a} presented a solution to the general peer-to-peer multiple matching problem with transfers in a time-dependent network. They proposed a decomposition algorithm and a dynamic programming approach to solve the problem and showed an increase in the total number of matching by using their algorithm. \\
	    	
            \item \textit{Integrating ridesharing with transit}: To integrate ridesharing with transit, \citet{Masoud2017} used the same dynamic programming framework as in \citet{Masoud2017a} to solve the current problem by introducing the transit stops and transfers as go-to-points and transit route as an inflexible driver in their algorithm. They considered one transit route and limited the number of transfer points to 40 to reduce the computational time. The major obstacle to applying their algorithm to a large-scale transit system would be high computational time.  \citet{Ma2019} proposed various queueing-theoretic vehicle dispatch and idle vehicle relocation algorithms for ridesharing with a possible drop-off of a passenger at a transit stop. \citet{Bian2019, Bian2019a} developed a matching optimization program and a Vickrey-Clarke-Groove (VCG) mechanism to determine the optimal vehicle-passenger matching, vehicle routing plan, and a customized pricing scheme respectively. The proposed mechanism is proved to be individual rational, incentive compatible, and price non-negative. \citet{Chen2020} developed a mixed-integer linear programming (MILP) to solve the FMLM with autonomous vehicles. A pioneering effort in developing a matching algorithm for this multimodal system is described by \citet{Stiglic2018}. They presented a matching framework to match riders and drivers for transit, park-and-ride, or ridesharing mode. However, they considered a cyclic (frequency-based) time table for transit service, assumed the closest stop to the destination as the alighting stop, and selected a transit trip with the least driving distance to make the complex problem easier to solve. 
	    \end{enumerate}

	    	\subsection{Motivation} The following points motivate us to pursue emerging research on transit-based ridesharing problem:
	    	
	    	\begin{enumerate}
                \item \textit{Complexity of a schedule-based (SB) transit service}: Previous studies such as \citet{Stiglic2018, Masoud2017a} have considered a cyclic (or frequency-based) transit service while designing a method for transit-based ridesharing. In the case of a cyclic timetable with identical trains, the travel time is not dependent on the departure time of the train from a transit station. However, large-scale transit systems incorporate traffic congestion and transit demand (to evaluate dwell time) while designing the schedule that usually varies during the day. Incorporating schedule becomes even more important when the transit service is not frequent, e.g., at park-and-ride stations, where buses arrive almost every hour. In that case, a frequency-based approach will result in an inaccurate estimation of wait time. The correct estimation of wait time, walking time, and transfer time are important to assess the feasibility of matches. We address this issue by proposing a systematic schedule-based ridesharing algorithm to find feasible matches. Furthermore, this would help in developing a transit-based ridesharing app that can provide directions in a multimodal transportation network.
	    		
	    		\item \textit{Drop-off station}: Previous studies have considered the closest stop from origin and destination for the drop-off and alighting (respectively) of a passenger. However, the closest stop may not give access to a faster transit route. Our proposed algorithm considers all the possible boarding and alighting stops. This will improve the possibility of assigning a faster service to the passenger. 
	    		
                \item \textit{Storage complexity}: Previous studies have used stored shortest transit travel itineraries. However, this is not possible for an SB transit network due to huge storage cost. For example, the Twin Cities transit system has 13673 stops in the network and for 1440 possible departure times in a day, we need to store  $2.1e^{12}$ possible travel times and itineraries. Furthermore, the change in the schedule for various days (weekdays, weekends, or holidays), or disruptions in the service makes it cumbersome to store such information. One of the contributions of this study is to avoid such storage with real-time transit direction computation with a fast algorithm.

	    	\end{enumerate}	    	
	    All the above-cited studies have made important contributions to this challenging problem, however, neither of the studies has considered the complexities of the transit part of the trip and showed application on a large-scale transit network. The current study leverages the schedule-based transit shortest path (SBTSP) algorithm to find potential matches between riders and drivers. The proposed algorithm uses the concept of space-time prism of both rider and driver to reduce the size of the network search. To solve this problem in real-time, a practical algorithm using a rolling horizon approach is proposed in this paper.  The proposed method not only matches riders and drivers for the first/last mile of the rider’s trip but also assigns an optimal transit station and a transit vehicle trip from the schedule for the rest of the rider's itinerary.

	 \section{Preliminaries}\label{sec:prelim}
	 	In this section, we discuss operation of a transit-based ridesharing program and describe notations and concepts useful to understand the matching problem. The ridesharing program receives a set of requests ($P = R \cup D$) that can be partitioned into a set of riders $R$, who are looking for a ride and have first mile/last mile problem, and a set of drivers $D$, who are willing to give a ride to the ride seekers to a transit station. Each request $p \in P$ is defined by a tuple $\{OR(p), DS(p), \tau^{at}(p), \tau^{pd}(p), \tau^{pa}(p), \delta(p), \Delta(p)\}$, indicating its origin $OR(p)$, destination $DS(p)$, announcement time $\tau^{at}(p)$, preferred departure time from origin $\tau^{pd}(p)$, preferred arrival time at destination $\tau^{pa}(p)$, maximum acceptable schedule deviation $\delta(p)$, and total travel delay $\Delta(p)$ allowed in the trip. Let $t_{ij}$  be the travel time from location $i$ to location $j$. Using above information, the earliest ($\tau^{ed}(p)$) and latest  ($\tau^{ld}(p)$) departure time from origin, earliest ($\tau^{ea}(p)$) and latest ($\tau^{la}(p)$) arrival time at destination, and maximum ride time ($t_{max}(p)$) can be calculated as:
	 	\begin{subequations}	
	 		\begin{align}		 
				 \tau^{ed}(p) = \tau^{pd}(p) - \delta(p)\\
				 \tau^{ld}(p) = \tau^{pd}(p) + \delta(p)\\
				 \tau^{ea}(p) = \tau^{pa}(p) - \delta(p)\\
				 \tau^{la}(p) = \tau^{pa}(p) + \delta(p)\\
				 t_{max} = t_{OR(p), DS(p)} + \Delta
			\end{align}
		 \end{subequations}	 
    In this ridesharing program, there is an online platform where riders and drivers can submit their requests, and the system would match these riders and drivers to travel together. A rider is matched with some driver who can drop off the rider at a transit station, from where the rider can take a bus or train to their destination. To get an insight into the problem, let us consider the following instance:\\

	\begin{figure}[h!]        
		\centering
		\begin{tikzpicture}[]    
		\node[shape=circle,draw=black] (1) at (0,0) {$OR(j)$};
		\node[shape=circle,draw=black] (2) at (8, 0) {$DS(j)$};
		\node[shape=circle,draw=black] (3) at (0,-5) {$OR(i)$};
		\node[shape=circle,draw=black] (4) at (8,-5) {$S(ij)$};
		\node[shape=circle,draw=black] (5) at (13, -5) {$DS(i)$};
		\path [->] (1) edge node[left, yshift=7pt, xshift =13pt] {$t_{OR(j)DS(j)} = 5$} (2);
		\path [->] (1) edge node[above, rotate=90, yshift=0pt, xshift =3pt] {$t_{OR(j)OR(i)} = 2$} (3);
		\path [->] (3) edge node[left, yshift=7pt, xshift =13pt] {$t_{OR(i)s} = 3$} (4);
		\draw[bend left, dashed, ->]  (4) to node [above,  yshift=-5pt, xshift =0pt] {$t_{S(ij)DS(i)} = 5$} (5);
		\draw[bend right,->]  (4) to node [below, , yshift=2pt, xshift =10pt] {$t_{S(ij)DS(i)} = 3$} (5);
		\path [->] (4) edge node[left, rotate=90, yshift=7pt, xshift =35pt] {$t_{S(ij)DS(j)} = 2$} (2);
		\path [->] (5) edge node[right, yshift=8pt, rotate=-45, xshift =-25pt] {$t_{DS(i)DS(j)} = 2$} (2);
		
		\end{tikzpicture}
		\begin{tikzpicture}[thick, scale=0.7, every node/.style={scale=0.6}]
		\draw[thick, dashed, ->] (-0.3,-4) -- (0.4,-4);    
		\node[text width=4cm] at (2.5,-4) {Transit link};
		\draw[thick, ->] (-0.3,-5) -- (0.4,-5);    
		\node[text width=4cm] at (2.5,-5) {Auto link};
		\end{tikzpicture}
		\caption{An example of transit-based ridesharing}
	\end{figure}
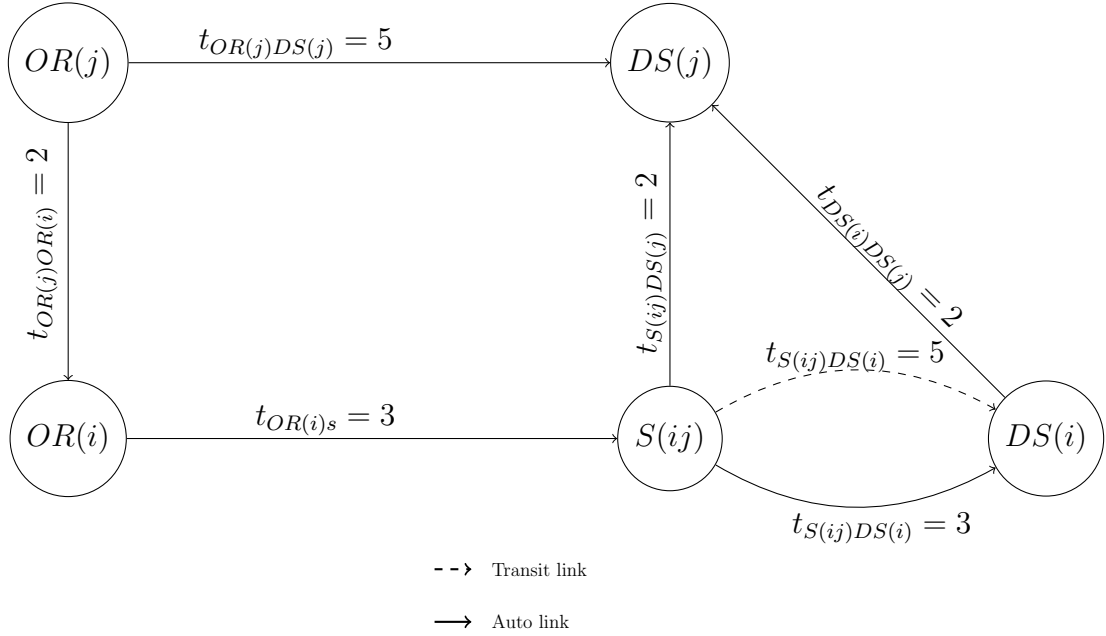

    A driver $j \in D$ is going from $OR(j)$ to $DS(j)$, which takes about 5 minutes by driving. A rider $i \in R$, going from $OR(i)$ to $DS(i)$ is facing the first-mile access problem and cannot take transit, so $i$ must drive to the destination, which takes 6 minutes of driving time. However, if $j$ agrees to give a ride to $i$, then $j$ can drop $i$ off at a station $S(ij)$, from where $i$ can take transit to her destination and $j$ can drive to her destination. The overall auto vehicle-minutes spent on the combined trip with transit-based ridesharing is 7 minutes in comparison to 5 + 6 = 11 minutes without ridesharing and 10 minutes with only ridesharing (i.e., driver $j$ going to rider's destination $DS(i)$ and then going to her destination $DS(j)$). Due to time constraints, it might not be possible for a driver to give a ride all the way to the rider's destination but can only give a ride to a transit station. This example shows that transit-based ridesharing can save a lot more vehicle-min than any other mode considered here and possibly more matching rate than a stand-alone ridesharing program.
	
	\subsection{Network topology}
	In this section, we describe the notations related to graphs characterizing the road and transit network as these notations will be used throughout the text. 
	
	\subsubsection{Road network}
    Let $G_R(N_R, A_R)$ be a digraph representing road network, where $N_R$ is the set of nodes and $A_R$ is the set of links connecting these nodes. We consider a static or non-timed road network for this study. Let $t: N_R \times N_R \rightarrow \mathfrak{R}_{+}$ be a function that returns the shortest travel time between two nodes in the road network. The efficient shortest path algorithms (such as Dijkstra's algorithm using binary heap data structure) take a fraction of a second to compute the shortest path on a network with thousands of nodes and links. For this network, we store the shortest-path trees in the memory (storage cost $\mathcal{O}(N_R^2)$) and call it whenever required by the matching algorithm. Let us denote the set of origins and the destinations of the participants as $O = \cup_{p \in P} OR(p)$ and $D = \cup_{p \in P} DS(p)$  respectively. Note that $O \cup D \subseteq N_R$. \\

	\subsubsection{Transit network}\label{sec:tn}
     A schedule-based (SB) transit network can capture the complexities of the movement of a passenger such as precise waiting time, in-vehicle time, and transfer time to other routes. This network $G(N_T, A_T)$ is a digraph that is created using service schedule data obtained from transit agencies. The General Transit Feed Specification (GTFS) is a standard format of transit schedule data released publicly by various transit agencies throughout the world. We use trip-based network representation for modeling the transit network (\cite{Khani2014}). In this procedure, the main consideration is given to the fact that every deviation from a transit route could not be considered as a transfer. Depending upon the acceptable walking time, waiting time, and direction of movement, transfers are created between two transit routes. The details about the creation of this network are described in the next paragraph. \\
	
	In the schedule data, let us denote the set of transit stops\footnote{A transit stop is a geographic location from where a traveler can board/alight the bus.} as $B$, set of transit routes/lines\footnote{A transit line is defined by a set of stops with a starting and an end point between which buses run back and forth.} as $L$, and set of trips\footnote{A trip is a travel itinerary of a bus with arrival and departure time specified at different stops.} as $K$. Each trip $k \in K$ is characterized by a route $l_k \in L$, set of stops (or nodes) $B_k \subset B\times K$, scheduled arrival/departure time $\tau: B_k \mapsto \mathfrak{R}_{+}$ at these stops, and a sequence $\zeta : B_k \mapsto \mathfrak{N}$ in which these stops are visited. The set of nodes in the transit network are defined as $N_T = \cup_{k \in K} B_k $. The set of links in the transit network, $A_T = A_{T_a} \cup A_{T_v} \cup A_{T_w}$ consists of three types of links, namely, acess/egress links $A_{T_a}$, in-vehicle links $A_{T_v}$, and walking/waiting transfer links $A_{T_w}$. Let $w: (N_T \cup N_R) \times (N_T \cup N_R) \mapsto \mathfrak{R}_{+}$ be the walking distance between two nodes. The access/egress arcs are walking links created between origins and transit stops or between transit stops and destinations if the distance between a pair is less than an acceptable walking distance $w_0$ (say 0.75 mi), i.e., $A_{T_a} = \{(i, j) : w(i, j) \le w_0 \text{ for some } i \in O, j \in N_T \} \cup \{(i, j) : w(i, j) \le w_0 \text{ for some } i \in N_T, j \in D \}$. The in-vehicle links $A_{T_v} = \{(i, j) : i, j \in B_k \text{ for some } k \in K\}$ are created using itinerary of any transit trip. Finally, the waiting/walking transfer links are defined as $A_{T_w} = \{(i, j) 
	\ | \ i \in B_k, j \in B_{k^{'}} \text{ for some } k, k' \in K, l_i \ne l_{j}, \zeta(i) \ne 1, \zeta(j) \ne \max_{m \in B_{k^{'}}} \zeta(m), w(i, j ) \le w_1, \vert \tau(j) - \tau(i) - w(i, j) \vert \le \rho\}$, where $\rho$ is the threshold by which a bus trip can be early or late, and $w_1$ is the acceptable walking distance (say 0.25 mi) for transfers. \\
	
	Finally, let $Z:N_T \mapsto N_R$ be a function which outputs closest road node for any transit node, i.e., $Z(s) = \text{argmin} \{ w(n, s) : n \in N_R\}$. Let $A_m = \{(i, j): i = Z(j) \text{ for some } j \in N_T \}$ be the mode transfer links which are used to access the transit network from a road node and vice-versa. The overall multimodal network is denoted by $G(N, A)$, where $N = N_R \cup N_T$, and $A = A_R \cup A_T \cup A_m$. \\
	
	\subsection{Space-time prism}
    Before delving into the matching problem, let us introduce a few more definitions which will help create a network search reduction technique later in the \cref{sec:sbtsp}. The following definitions are given in the context of ridesharing, however, more general definitions can be found in \citet{Miller2017}.\\

	Because of spatial and temporal limitations, a traveler can only be present in one place at a time due to which the activities performed by a traveler in space are restricted by the available time budget. Space-time prism provides us a framework that recognizes underlying constraints on human activities in space-time and also provides us an effective way of keeping track of these conditions.
	
	\begin{defn}(Anchor) An anchor is a node in the network at which a traveler begins/ends her journey. For a participant $p \in P$ in the ridesharing program, $OR(p)$ and $DS(p)$ are possible anchors.         
	\end{defn}

	\begin{defn}(Time-forward and time-backward cone) The time-forward cone, $FC_p\left(\tau\right)$, for a participant $p \in P$  pointed at the anchor $OR(p)$ is the set of nodes that can be reached at time $\tau$ within a given time budget. Conversely, the time-backward cone, $BC_p\left(\tau\right)$, for a participant $p \in P$ pointed at the anchor $DS(p)$  is a set of nodes from where $DS(p)$ can be reached within the remaining time budget $\tau^{la}(p) - \tau$.    Mathematically,
		\begin{subequations}    
			\begin{align}
			FC_p\left(\tau\right)= & \{ n \ : \ \tau \ge \tau^{ed}(p)  + t_{OR(p)n}, \tau \le \tau^{la}(p) \} \quad   \label{eq:1}\\
			BC_p\left(\tau\right)= & \{n \ : \ \tau \le \tau^{la}(p) - t_{nDS(p)}, \tau \ge \tau^{ed}(p) \}\quad   \label{eq:2}\
			\end{align}
		\end{subequations}    
	\end{defn}
	
	\begin{defn}(Space-time prism) The space time prism, $STP(p)$, for a participant $p \in P$ is defined as
		\begin{equation}\label{eq:82}
		STP_p(\tau) = FC_p\left(\tau\right) \cap BC_p\left(\tau\right) 
		\end{equation}
	\end{defn}

	\begin{defn}(Potential path area) The space that is accessible to a participant within a given time budget is known as potential path area (PPA).
		\begin{equation}\label{eq:3}
		{PPA}_p=  \{ n : t_{OR\left(p\right)n}+\ t_{nDS\left(p\right)}\le \tau^{la}\left(p\right)- \tau^{ed}\left(p\right) \}  
		\end{equation}
	\end{defn}
	
	The anchors, time-forward cone, time-backward cone, potential path area, and space-time prism are visualized in Figure \ref{fig:stp}. The constraints defining the space-time prism can be used to reduce the network search for an optimal itinerary. 
	
	\begin{prop}\label{prop:0}
		(Potential Path Area) The projection of space-time prism onto space is the potential path area $PPA_p$ for participant $p$.
	\end{prop}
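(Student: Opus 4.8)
The plan is to unwind the three definitions involved and reduce the statement to an elementary one-variable feasibility check. By the \emph{projection of the space-time prism onto space} I mean the set $\Pi_p := \{\, n : n \in STP_p(\tau) \text{ for some time } \tau \,\}$; the proposition then asserts $\Pi_p = PPA_p$, which I would establish by showing that the two membership predicates are equivalent, i.e. by proving both inclusions simultaneously.

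First I would substitute the definitions of $FC_p$ and $BC_p$ into $STP_p(\tau) = FC_p(\tau) \cap BC_p(\tau)$. A node $n$ lies in $STP_p(\tau)$ exactly when $\tau$ satisfies all four inequalities
\[
\tau \ge \tau^{ed}(p) + t_{OR(p)n}, \qquad \tau \ge \tau^{ed}(p), \qquad \tau \le \tau^{la}(p) - t_{nDS(p)}, \qquad \tau \le \tau^{la}(p).
\]
Hence $n \in \Pi_p$ iff there exists $\tau$ in the interval $[\,L_n,\,U_n\,]$ with $L_n = \max\{\tau^{ed}(p) + t_{OR(p)n},\ \tau^{ed}(p)\}$ and $U_n = \min\{\tau^{la}(p) - t_{nDS(p)},\ \tau^{la}(p)\}$, that is, iff $L_n \le U_n$.

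The one substantive (though still trivial) ingredient is the nonnegativity of travel times: since $t$ takes values in $\mathfrak{R}_{+}$, we have $t_{OR(p)n} \ge 0$ and $t_{nDS(p)} \ge 0$, so the bounds $\tau \ge \tau^{ed}(p)$ and $\tau \le \tau^{la}(p)$ are redundant, giving $L_n = \tau^{ed}(p) + t_{OR(p)n}$ and $U_n = \tau^{la}(p) - t_{nDS(p)}$. Therefore $n \in \Pi_p$ iff $\tau^{ed}(p) + t_{OR(p)n} \le \tau^{la}(p) - t_{nDS(p)}$, which rearranges to $t_{OR(p)n} + t_{nDS(p)} \le \tau^{la}(p) - \tau^{ed}(p)$ — precisely the condition defining $PPA_p$ in \eqref{eq:3}. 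This yields both inclusions and closes the argument.

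There is no genuine obstacle here; the proof is a definition chase, and the only places requiring care are (i) pinning down "projection onto space" as an existential quantification over $\tau$, and (ii) discarding the redundant cone constraints via $t \ge 0$. As a byproduct, the same computation shows slightly more than the stated claim: for each $n \in PPA_p$ the set of feasible times is exactly the nonempty interval $[\,\tau^{ed}(p) + t_{OR(p)n},\ \tau^{la}(p) - t_{nDS(p)}\,]$, which is the fact actually exploited later to prune the network search.
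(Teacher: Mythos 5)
Your proof is correct, and its computational core --- comparing the binding lower bound $\tau \ge \tau^{ed}(p) + t_{OR(p)n}$ from the forward cone with the binding upper bound $\tau \le \tau^{la}(p) - t_{nDS(p)}$ from the backward cone --- is exactly the step the paper performs by adding inequalities \eqref{eq:4} and \eqref{eq:5}. The difference is one of completeness: the paper's proof only establishes the inclusion ``if $n \in STP_p(\tau)$ for some $\tau$, then $n \in PPA_p$,'' since adding two inequalities that hold for a given $\tau$ yields a necessary condition but says nothing about whether a feasible $\tau$ exists for every $n \in PPA_p$. You instead formalize the projection as an existential quantification over $\tau$, reduce membership to nonemptiness of the interval $[L_n, U_n]$, and observe that the remaining two cone constraints are redundant because $t \ge 0$; this gives the reverse inclusion as well, so you actually prove the set equality that the proposition asserts. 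Your closing remark --- that the feasible times at $n$ form exactly the interval $\bigl[\tau^{ed}(p) + t_{OR(p)n},\, \tau^{la}(p) - t_{nDS(p)}\bigr]$ --- is a useful strengthening, since it is precisely this interval that the label checks in Algorithm~\ref{alg:sbtsp} exploit. In short: same idea, but your version is the more rigorous reading of ``projection,'' and nothing in it is wrong.
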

	\begin{proof}
		Using (\ref{eq:1}) and (\ref{eq:2}), we have
		\begin{align}
		-\tau \le -\tau^{ed}(p)  - t_{OR(p)n} \label{eq:4}\\
		\tau \le \tau^{la}(p) - t_{nDS(p)} \label{eq:5}
		\end{align}
		Adding (\ref{eq:4}) and (\ref{eq:5}) yields,
		\begin{align}
		&\tau^{la}(p) -\tau^{ed}(p)  - t_{nDS(p)} - t_{OR(p)n} \ge 0 \nonumber\\
		\implies & \tau^{la}(p) -\tau^{ed}(p) \ge t_{OR(p)n} + t_{nDS(p)} \label{eq:6}.
		\end{align}
		which is same as \eqref{eq:3}.
	\end{proof}

	\begin{figure}[h!]
		\centering
		\begin{tikzpicture}
		\draw[->, thick] (0,0)--(8,0) node[right]{};
		\draw[->, thick] (0,0)--(0,8) node[right]{};
		\draw[-, thick] (0,0)--(2, 2) node[right]{};
		\draw[-, thick] (2,2)--(8,2) node[right]{};
		\draw[-, thick] (8,2)--(6,0) node[right]{};
		\draw (4,1) ellipse (1.5cm and 0.25cm);
		\draw[-, dashed] (2.5,1)--(2.5,7) node[right]{};
		\draw[-, dashed] (5.5,1)--(5.5,7) node[right]{};
		\draw[-, dashed] (2.5, 4.7)--(1.5, 5.5) node[right]{};
		\draw[-, dashed] (2.5, 4.7)--(1.5, 4) node[right]{};
		\draw[-, dashed] (5.5, 5.3) --(6.5, 6.2) node[right]{};
		\draw[-, dashed] (5.5, 5.3) --(6.5, 4.4) node[right]{};        
		\draw[->] (6, 7)--(4, 5.5) node[right]{};
		\node[font=\fontsize{8}{10}] at (6.2, 7.2) {$BC_p(\tau)$};        
		\draw[->] (2, 3)--(3.5, 4.5) node[right]{};
		\node[font=\fontsize{8}{10}] at (1.8, 2.8) {$FC_p(\tau)$};        
		\draw[->] (6, 3)--(4, 5) node[right]{};
		\node[font=\fontsize{8}{10}] at (6.2, 2.8) {$CY_p(\tau)$};        
		\draw[rotate = 14.5] (5.1,3.8) ellipse (1.5cm and 0.15cm);        
		\draw[fill = black] (3.5,1) ellipse (0.05cm and 0.01cm);
		\draw[fill = black] (4.5,1) ellipse (0.05cm and 0.01cm);
		\draw[-, dashed] (3.5, 4)--(3.5,1) node[right]{};
		\draw[-, dashed] (4.5, 4)--(4.5,1) node[right]{};
		\draw[-, dashed] (3.5, 4)--(0,4) node[right]{};
		\draw[-, dashed] (4.5, 6.2)--(0,6.2) node[right]{};        
		\draw[-, thick] (3.5, 4)--(2.5, 4.7) node[right]{};
		\draw[-, thick] (3.5, 4)--(5.5, 5.3) node[right]{};
		\draw[-, thick](2.5, 4.7)--(4.5, 6.2) node[right]{};
		\draw[-, thick](4.5, 6.2)--(5.5, 5.3) node[right]{};
		\draw[-, dashed] (4.5, 8)--(4.5,1) node[right]{};
		\draw[-, thick] (4.5, 7)--(4.5,6.2) node[right]{};
		\draw[-, thick] (3.5, 4)--(3.5, 3) node[right]{};    
		\node[text width=4cm] at (1.5, 8.2) {$Time$};
		\node[text width=4cm] at (9.4, -0.3) {$Space$};
		\node[font=\fontsize{8}{10}] at (3.5,0.9) {$OR(p)$};
		\draw[<-] (3.5,1)--(-1.5,3) node[right]{};
		\draw[<-] (4.5,1)--(-1.5,3) node[right]{};
		\node[font=\fontsize{5}{10}] at (-1.5,3.2) {ST anchors};
		\node[font=\fontsize{8}{10}] at (4.5,0.9) {$DS(p)$};
		\node[font=\fontsize{10}{10}] at (-0.5,4) {$\tau^{ed}(p)$};
		\node[font=\fontsize{10}{10}] at (-0.5,6.2) {$\tau^{la}(p)$};    
		\node[font=\fontsize{7}{10}] at (3.8,0.5) {$PPA_p$};    
		\draw[->, thick] (0,0)--(0,8) node[right]{};    
		\end{tikzpicture}
		
		\caption{Space-time prism ($STP(p)$) for participant $p$}
		\label{fig:stp}
	\end{figure}
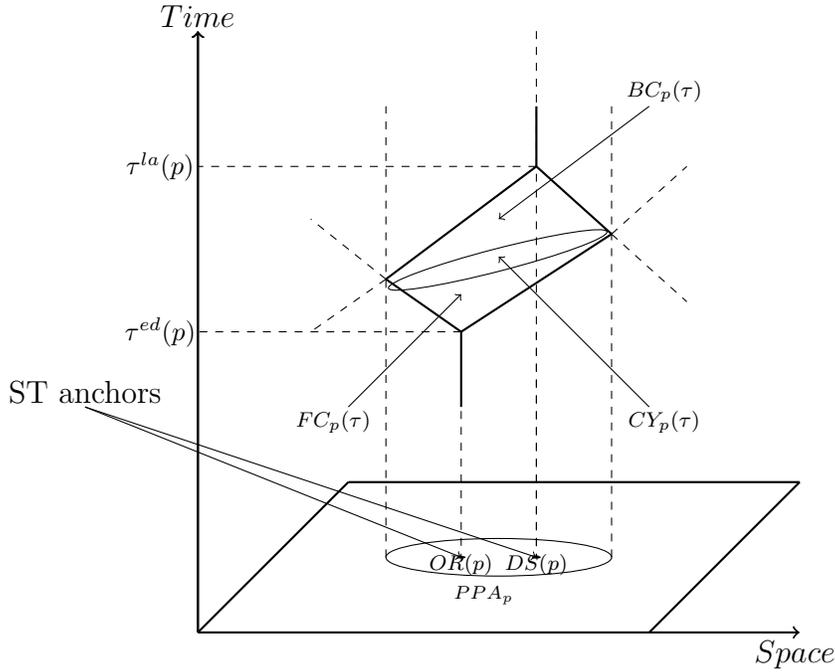
	
	\section{Transit-based ridesharing matching problem}\label{sec:tbrp}
	 In this section, we define the transit-based ridesharing problem. We make several assumptions that help in defining our problem. 
	\subsection{Assumptions}\label{sec:assp}
	\begin{enumerate}
	    \item All buses and trains are assumed to have sufficient capacity. Passengers do not face denied boarding due to capacity constraints.
		\item Transit service is reliable, which means that all buses and trains arrive on time according to their schedule.  
		\item Time spent in boarding and alighting a transit vehicle or getting on or off a car is negligible and can be incorporated as a service time ($t^{ser}$). The service time can also be used to account for delays in the schedule. 
		\item Transfer of a passenger between several drivers (multi-hop) is not allowed as this can be inconvenient for a rider and reduce the attractiveness of the program.
	\end{enumerate}
	\subsection{Problem Statement}
	The transit-based ridesharing matching problem is defined as follows. Given a set of requests $P$, a road network $G_R$, and a transit network $G_T$, find the following:
		\begin{enumerate}
			\item An optimal match between drivers and riders for ridesharing so that a given objective function is optimized, and
			\item Optimal drop off transit stops location $s \in B$ for the riders, and
			\item Optimal transit itinerary $\mathcal{I}$ for the riders.
		\end{enumerate}

	\begin{defn}(Feasible match) A match ($r, d, s, \mathcal{I}$) between a rider $r$ and driver $d$, drop off stop location $s$, and transit trip itinerary $\mathcal{I}$ is feasible if and only if it satisfies the following constraints:
		\begin{enumerate}
			\item $d$ should depart from $OR(d)$ after $\tau^{ed}(d)$
			\item $d$ should reach at $OR(r)$ after $\tau^{ed}(r)$ 
			\item $r$ should reach at $DS(r)$ before $\tau^{la}(r)$ 
			\item $d$ should reach at $DS(d)$ before $\tau^{la}(d)$ 
			\item $\mathcal{I}$ should be optimal for $r$ i.e., best possible transit path from stop $s$ consisting of different weights to transit trip components
		\end{enumerate}
	\end{defn}

    A feasible match between a driver and a rider must satisfy time constraints associated with the earliest departure and latest arrival time. They restrict passenger access to each node in the network at different times. The space-time prism in \eqref{eq:82} captures above four conditions. By making use of it, we make the following proposition. 
		\begin{prop}\label{prop:1}
		A driver $d \in D$ and a rider $r \in R$ can share a ride together if and only if $OR(r) \in STP_r(\tau) \cap STP_d(\tau)$ as well as $Z(s_k) \in  STP_r(\tau) \cap STP_d(\tau)$, where $s_k \in N_T$ is the drop-off node of $r$.
	\end{prop}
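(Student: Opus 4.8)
The plan is to prove the two implications separately, reading the time argument $\tau$ in each membership statement as the epoch of the event occurring at that node: a pickup time $\tau^{P}$ for the condition at $OR(r)$ and a drop-off time $\tau^{Q}=\tau^{P}+t_{OR(r)Z(s_k)}$ for the condition at $Z(s_k)$, where we identify the transit node $s_k$ with its road image $Z(s_k)$ up to the negligible mode-transfer/service time of Assumption~3. Two elementary facts will do all the work: (i) the triangle inequality $t_{ab}\le t_{ac}+t_{cb}$ for the road travel-time function $t$; and (ii) the realized duration of any leg of the shared trip is at least the shortest-path duration between its endpoints, where for a leg of $r$'s trip served by transit the relevant term $t_{\,\cdot\,DS(r)}$ is read as the optimal transit travel time returned by the SBTSP algorithm (so that an itinerary $\mathcal{I}$ attaining it is available).

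For necessity, suppose $(r,d,s_k,\mathcal{I})$ is a feasible match. Condition~1 gives $\tau^{P}\ge\tau^{ed}(d)+t_{OR(d)OR(r)}$, and condition~2 gives $\tau^{P}\ge\tau^{ed}(r)\,(=\tau^{ed}(r)+t_{OR(r)OR(r)})$; since after pickup the driver still covers $OR(r)\!\to\!s_k\!\to\!DS(d)$, which by the triangle inequality takes at least $t_{OR(r)DS(d)}$, condition~4 forces $\tau^{P}+t_{OR(r)DS(d)}\le\tau^{la}(d)$, and likewise $\tau^{Q}+t_{Z(s_k)DS(d)}\le\tau^{la}(d)$ directly from condition~4; applying the triangle inequality to the mixed car-then-transit leg $OR(r)\!\to\!s_k\!\to\!DS(r)$, condition~3 yields $\tau^{P}+t_{OR(r)DS(r)}\le\tau^{la}(r)$, and restricting to the transit leg alone gives $\tau^{Q}+t_{Z(s_k)DS(r)}\le\tau^{la}(r)$; finally $\tau^{Q}\ge\tau^{ed}(d)+t_{OR(d)Z(s_k)}$ follows from condition~1 together with $t_{OR(d)OR(r)}+t_{OR(r)s_k}\ge t_{OR(d)Z(s_k)}$, and $\tau^{Q}\ge\tau^{ed}(r)+t_{OR(r)Z(s_k)}$ from $\tau^{P}\ge\tau^{ed}(r)$. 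Matching these inequalities against \eqref{eq:1}--\eqref{eq:2} (the few remaining bounds of the form $\tau\le\tau^{la}(\cdot)$, $\tau\ge\tau^{ed}(\cdot)$ being immediate since $t\ge 0$) gives $OR(r)\in STP_r(\tau^{P})\cap STP_d(\tau^{P})$ and $Z(s_k)\in STP_r(\tau^{Q})\cap STP_d(\tau^{Q})$.

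For sufficiency, assume $OR(r)\in STP_r(\tau^{P})\cap STP_d(\tau^{P})$ and $Z(s_k)\in STP_r(\tau^{Q})\cap STP_d(\tau^{Q})$ with $\tau^{Q}=\tau^{P}+t_{OR(r)Z(s_k)}$; I will build a feasible match. Let $d$ depart $OR(d)$ at $\tau^{P}-t_{OR(d)OR(r)}$, which is $\ge\tau^{ed}(d)$ by $OR(r)\in FC_d(\tau^{P})$ (condition~1); driving shortest paths, $d$ reaches $OR(r)$ at $\tau^{P}\ge\tau^{ed}(r)$ by $OR(r)\in FC_r(\tau^{P})$ (condition~2), then drives $r$ to $s_k$, arriving at $\tau^{Q}$, and continues to $DS(d)$, arriving at $\tau^{Q}+t_{Z(s_k)DS(d)}\le\tau^{la}(d)$ by $Z(s_k)\in BC_d(\tau^{Q})$ (condition~4); let $\mathcal{I}$ be an optimal transit itinerary out of $s_k$ departing no earlier than $\tau^{Q}$, whose existence and optimality are guaranteed by the SBTSP algorithm, and which delivers $r$ to $DS(r)$ by $\tau^{Q}+t_{Z(s_k)DS(r)}\le\tau^{la}(r)$ thanks to $Z(s_k)\in BC_r(\tau^{Q})$ (conditions~3 and~5). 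Hence $(r,d,s_k,\mathcal{I})$ is a feasible match.

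I expect the main obstacle to be conceptual rather than computational: the space-time prism is phrased through \emph{direct} anchor-to-anchor shortest-path times, whereas a shared ride is a \emph{detour} $OR(d)\!\to\!OR(r)\!\to\!s_k\!\to\!DS(d)$ for $d$ and a car-then-transit leg $OR(r)\!\to\!s_k\!\to\!DS(r)$ for $r$. The triangle inequality for $t$ is precisely what lets the detour's feasibility imply the direct-time bounds (necessity), and in the converse one must realize the detour inside all four time windows simultaneously --- which is exactly why both $OR(r)$ and $Z(s_k)$ are required to lie in $STP_r\cap STP_d$ and why the drop-off epoch is pinned to $\tau^{P}+t_{OR(r)Z(s_k)}$ rather than chosen freely. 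The only further care needed is the identification of $s_k$ with $Z(s_k)$ via Assumption~3 and the reading of the rider's backward-cone term as an optimal transit time, so that the SBTSP-generated $\mathcal{I}$ simultaneously certifies conditions~3 and~5.
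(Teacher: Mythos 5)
Your proof is correct, and it reaches the same conclusion as the paper by a noticeably different and more explicit route. For necessity the paper argues by contradiction through the potential path area: if a node of $J=\{OR(r),Z(s_k)\}$ escapes a participant's prism it escapes her $PPA$, so the total budget $\tau^{la}-\tau^{ed}$ is exceeded and the ride cannot be completed. You instead verify the four cone inequalities directly from the five feasibility conditions, at two distinct epochs $\tau^{P}$ (pickup) and $\tau^{Q}=\tau^{P}+t_{OR(r)Z(s_k)}$ (drop-off), using the triangle inequality to pass from the detoured ride $OR(d)\to OR(r)\to s_k\to DS(d)$ to the anchor-to-node shortest-path times that define the cones. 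For sufficiency the paper essentially asserts that joint accessibility of both nodes makes a shared ride possible, whereas you construct the schedule explicitly (departure at $\tau^{P}-t_{OR(d)OR(r)}$, drop-off at $\tau^{Q}$, SBTSP itinerary onward) and check each feasibility condition, including the optimality of $\mathcal{I}$. What your version buys is a resolution of two points the paper leaves implicit: the single symbol $\tau$ in the statement cannot denote the same instant for both $OR(r)$ and $Z(s_k)$ (the driver cannot be at both simultaneously), so the two memberships must be read at the pickup and drop-off epochs respectively; and the rider's backward-cone term $t_{\,\cdot\,DS(r)}$ must be read as an optimal transit travel time rather than a road time for the mixed car-then-transit leg to close the argument. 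Both readings are consistent with how the prism is used elsewhere in the paper, and making them explicit strengthens rather than changes the result.
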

	\begin{proof}
		($\implies$) We prove this by contradiction. Let $J = \{OR(r), Z(s_k)\}$. Let us assume that $J \subsetneq STP_d(\tau)$, then for some $n \in J$, $n \notin PPA_d$, i.e.,  $\tau^{ed}(d) + t_{OR(d) n} + t_{n DS(d)} > \tau^{la}(d)$, which means $d$ cannot reach to her destination in time, thus violating the time constraints which is a contradiction according to (\ref{eq:1})-(\ref{eq:3}). Furthermore, $OR(r)$ is clearly in $STP(r)$ as this node is the anchor of time-forward cone. However, if $Z(s_k) \notin STP_r(\tau)$, then, $Z(s_k) \notin PPA_r$, i.e., $\tau^{ed}(r) + t_{OR(r) Z(s_k)} + t_{s_k DS(r)} > \tau^{la}(r)$, resulting in the violation of time constraints for a feasible match.
		
		\noindent ($\impliedby$) If $J \subseteq  STP_r(\tau) \cap STP_d(\tau)$, then both nodes present in $J$ are accessible by $r$ and $d$ without violating the time budget constraints for any of them, which means they can be present at these nodes at the same time, making it possible to share a ride. Therefore, the proposition holds.
	\end{proof}
			\begin{figure}[h!]
		\centering
		\begin{tikzpicture}
		\draw[->, thick] (0,0)--(10,0) node[right]{}; %
		\draw[->, thick] (0,0)--(0,8) node[right]{};
		\draw[-, thick] (0,0)--(2, 2) node[right]{};
		\draw[-, thick] (2,2)--(8,2) node[right]{};
		\draw[-, thick] (8,2)--(6,0) node[right]{};
		\draw (4,1) ellipse (1.5cm and 0.4cm);
		\draw[-, dashed] (2.5,1)--(2.5,7) node[right]{};
		\draw[-, dashed] (5.5,1)--(5.5,7) node[right]{};
		\draw[-, dashed] (2.5, 4.7)--(1.5, 5.5) node[right]{};
		\draw[-, dashed] (2.5, 4.7)--(1.5, 4) node[right]{};
		\draw[-, dashed] (5.5, 5.3) --(6.5, 6.2) node[right]{};
		\draw[-, dashed] (5.5, 5.3) --(6.5, 4.4) node[right]{};

		\draw[fill = black] (3.5,1) ellipse (0.05cm and 0.01cm);
		\draw[fill = black] (4.5,1) ellipse (0.05cm and 0.01cm);
		\draw[fill = black] (4.8,1.15) ellipse (0.05cm and 0.01cm);
		\draw[-, dashed] (3.5, 4)--(3.5,1) node[right]{};
		\draw[-, dashed] (4.5, 4)--(4.5,1) node[right]{};
		\draw[-, dashed] (3.5, 4)--(0,4) node[right]{};
		\draw[-, dashed] (4.5, 6.2)--(0,6.2) node[right]{};
		
		\draw[-, thick] (3.5, 4)--(2.5, 4.7) node[right]{};
		\draw[-, thick] (3.5, 4)--(5.5, 5.3) node[right]{};
		\draw[-, thick](2.5, 4.7)--(4.5, 6.2) node[right]{};
		\draw[-, thick](4.5, 6.2)--(5.5, 5.3) node[right]{};
		\draw[-, dashed] (4.5, 8)--(4.5,1) node[right]{};
		\draw[-, thick] (4.5, 7)--(4.5,6.2) node[right]{};
		\draw[-, thick] (3.5, 4)--(3.5, 3) node[right]{};

		\node[text width=4cm] at (1.5, 8.2) {$Time$};
		\node[text width=4cm] at (12.1, 0) {$Space$};
		\node[font=\fontsize{2}{5}] at (3.5,0.9) {$OR(d)$};
		\node[font=\fontsize{2}{5}] at (4.5,0.8) {$DS(d)$};
		\node[font=\fontsize{8}{10}] at (-0.4,3.9) {$\tau^{ed}(d)$};
		\node[font=\fontsize{8}{10}] at (-0.4,6.1) {$\tau^{la}(d)$};

		\draw[->, thick] (0,0)--(0,8) node[right]{};

		\draw[->] (2, 7)--(4, 5.5) node[right]{};
		\node[font=\fontsize{8}{10}] at (2, 7.1) {$STP(d)$};
		
		\draw[->] (7, 4)--(5.6, 5.7) node[right]{};
		\node[font=\fontsize{8}{10}] at (7.1, 3.8) {$STP(r)$};

		\begin{scope}[shift={(1,0.2)}]
		\draw (4,1) ellipse (1.5cm and 0.25cm);
		
		\draw[-, dashed] (2.5,1)--(2.5,7) node[right]{};
		\draw[-, dashed] (5.5,1)--(5.5,7) node[right]{};
		\draw[-, dashed] (2.5, 4.7)--(1.5, 5.5) node[right]{};
		\draw[-, dashed] (2.5, 4.7)--(1.5, 4) node[right]{};
		\draw[-, dashed] (5.5, 5.3) --(6.5, 6.2) node[right]{};
		\draw[-, dashed] (5.5, 5.3) --(6.5, 4.4) node[right]{};

		\draw[fill = black] (3.5,1) ellipse (0.05cm and 0.01cm);
		\draw[fill = black] (4.5,1) ellipse (0.05cm and 0.01cm);
		\draw[-, dashed] (3.5, 4)--(3.5,1) node[right]{};
		\draw[-, dashed] (4.5, 4)--(4.5,1) node[right]{};
		\draw[-, dashed] (3.5, 4)--(-1,4) node[right]{};
		\draw[-, dashed] (4.5, 6.2)--(-1,6.2) node[right]{};
		
		\draw[-, thick] (3.5, 4)--(2.5, 4.7) node[right]{};
		\draw[-, thick] (3.5, 4)--(5.5, 5.3) node[right]{};
		\draw[-, thick](2.5, 4.7)--(4.5, 6.2) node[right]{};
		\draw[-, thick](4.5, 6.2)--(5.5, 5.3) node[right]{};
		\draw[-, dashed] (4.5, 8)--(4.5,1) node[right]{};
		\draw[-, thick] (4.5, 7)--(4.5,6.2) node[right]{};
		\draw[-, thick] (3.5, 4)--(3.5, 3) node[right]{};

		\node[font=\fontsize{5}{10}] at (3.1,1) {$OR(r)$};
		\node[font=\fontsize{5}{10}] at (4.1,0.9) {$Z(s)$};
		\node[font=\fontsize{5}{10}] at (5,1) {$DS(r)$};
		\node[font=\fontsize{8}{10}] at (-1.4,4.1) {$\tau^{ed}(r)$};
		\node[font=\fontsize{8}{10}] at (-1.4,6.3) {$\tau^{la}(r)$};
		\end{scope}
		\end{tikzpicture}		
		\caption{Feasible region where $r$ and $d$ can meet}
		\label{fig:int}	
	\end{figure}
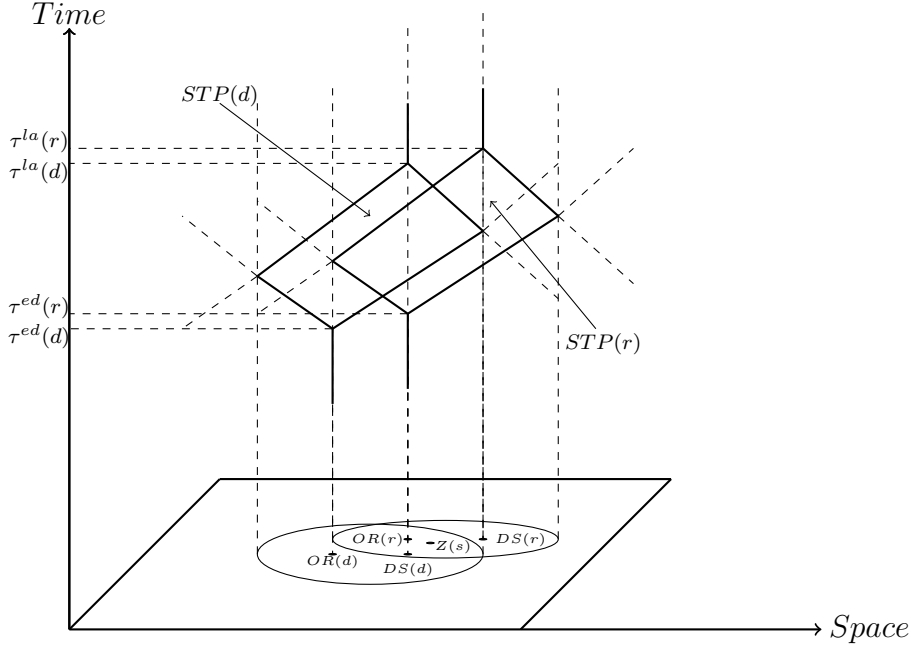
	
    Following Proposition \eqref{prop:1}, $STP_r(\tau) \cap STP_d(\tau)$ provides us the set of potential nodes in the network where $r$, and $d$ can be present at the same time $\tau$. These nodes can be used for potential rideshare between $r$ and $d$ as shown in Figure \ref{fig:int}. This approach will greatly reduce the size of the network to be considered for ridesharing and thus reduce the computational time for finding a feasible match between a rider and driver.\\
	
	\section{Solving transit-based ridesharing matching problem}\label{sec:sbtsp}
	
    For simplicity, we present a methodology for riders facing the first-mile access problem, i.e., no transit access at the origin for inbound trips due to limited transit network coverage or infrequent service in suburbs but sufficient transit access to destinations in the city center. The procedure can be reversed for the last mile access problem (for outbound trips), a discussion on which is given in \cref{sec:lm}.  The process of finding optimal matches for transit-based ridesharing follows a two-step procedure. The first step finds a set of feasible matches between drivers and riders, and in the second step, we use a matching optimization program to assign riders to drivers. They are described in detail below:
	
	 \subsection{An algorithm to find feasible matches}
    The proposed method leverages \textit{schedule-based transit shortest path} (SBTSP) in developing an algorithm for finding feasible matches. The quickest path in an SB transit network may not be optimal for a passenger. For example, the quickest path may have several transfers that are less attractive to a passenger. By assigning different weights to the cost of different links, SBTSP finds an optimal path according to user route choice behavior. For example, waiting and transfers are considered onerous components of a transit trip and hence can be assigned a higher weight. The developed algorithm returns a set of feasible matches for the riders and drivers participating in this program. We assume that the reader is familiar with shortest path labeling algorithms for the discussion to follow in the paragraph below: \\

	Given that $r$ departs from node $n \in N_T$, we find the labels $\gamma_n$  that specify the latest time $r$ should depart from $n$ in order to reach $DS(r)$ before $\tau^{la}(r)$. Let $SEL$ be a scan eligible list, $\xi_i$ be the predecessor of node $i$ in shortest path, $\Gamma^{-1}(i) = \{j \ | \ (j, i) \in A_T \}$ be the backward star of node $i$ (i.e., collection of all adjacent nodes from where node $i$ can be reached). We now  define weights associated with different types of links in $G_T$ for calculating a generalized cost. Let $\eta_a, \eta_v, \text{and } \eta_w$ be the weights associated with links in set ${A_T}_a, {A_T}_v, \text{and } {A_T}_w$ respectively. Using weighted sum of the cost of traversing different types of link in the transit network, a generalized cost label $\gamma^{gc}_n$ is maintained for each node $n \in {N_T}$.\\
	
	\noindent \textit{Bellman's prinicipal of optimality}: For any node $i \in {N_T}, \gamma^{gc}_i$ should satisfy the following condition:
	\begin{equation}\label{eq:bpo}
	\gamma^{gc}_j = \underset{j \in \Gamma(i)}{\text{min}} \{\gamma^{gc}_i + t_{ij} \}
	\end{equation}
	Pseudocode for finding feasible matches is given in Algorithm~\ref{alg:sbtsp}. The algorithm starts with the initialization of a set of potential matches as $\mathcal{M} = \phi$. For every rider $r \in R$, we run a backward shortest path from $DS(r)$.  We initialize a scan eligible list (SEL) and maintain two different types of labels--time labels $\gamma$ and generalized cost labels $\gamma^{gc}$. The time labels are used to check if the time constraints are satisfied for both riders and drivers while the generalized cost labels $\gamma^{gc}$ are used to maintain a minimum generalized cost for a rider which is calculated as the weighted sum of the cost of traversing different types of link in the transit network (line 11-16). Line 18 checks the Bellman's principle of optimality (\ref{eq:bpo}) and restrict the network search by using driving time as a lower bound on transit time from $OR(r)$ to node $j$. While updating labels of each node $j \in {N_T}$, each driver $d \in D$ is checked for her compatibility with given rider $r$. Line 23 checks if $d$ can reach $OR(r)$ by $\tau^{ed}(r)$, then we further check if $d$ can reach $DS(d)$ before $\tau^{la}(d)$. In other words, these two \emph{if} statements check if $j \in STP_d(\gamma) \cap STP_r(\gamma)$. If all these conditions are satisfied, then we add $(r, d, j, \xi)$ to $\mathcal{M}$. The node $j$ and predecessor set $\xi$ can be used to retrieve the drop off station location $s$ and optimal itinerary $\mathcal{I}$ for $r$. The procedure will be reversed for the last mile problem, in which case, we run the algorithm in the forward direction.   
	
	\begin{algorithm}[H]
		\caption{Pseudocode for finding rider driver feasible matches}\label{euclid}
		\label{alg:sbtsp}
		\begin{algorithmic}[1]
			\Procedure{RiderDriverPotentialMatch}{$R, D, G, \eta$}
			\State $\mathcal{M} \gets$  \{\} \Comment{intializing set of potential matches}
			\For{$r$ in $R$}
			\State $i \gets DS(r), \gamma_i \gets \tau^{la}(r), \gamma^{gc}_i \gets 0, \xi_i = \text{NULL}$        
			\State $\gamma_j \gets \infty, \gamma^{gc}_j \gets \infty, \xi_j \gets \phi \quad \forall j  \ne i$     \Comment{intializing node labels and SEL}
			\State SEL $\gets \{i\}$
			\While{SEL $\ne \phi$}
			\State $i \gets \underset{k}{\text{argmin}} \{\gamma^{gc}_k \ | \ k \in \text{SEL}\}$
			\State SEL $\gets$ SEL$\backslash\{i\}$
			\For{each $j \in \Gamma^{-1}(i)$}
			\If{$(j, i) \in {A_T}_a$}
			\State $\gamma^{gc}_{new} \gets \gamma^{gc}_i + \eta_a*t_{ji}$
			\ElsIf{$(j, i) \in {A_T}_v$}
			\State $\gamma^{gc}_{new} \gets \gamma^{gc}_i + \eta_v*t_{ji}$
			\Else
			\State$\gamma^{gc}_{new} \gets \gamma^{gc}_i + \eta_w*t_{ji}$
			\EndIf
			\State$\gamma_{new} \gets \gamma_i -t_{ji}$
			\If{$ \gamma^{gc}_{new} < \gamma^{gc}_j$ and $\tau^{ed}(r) + t_{OR(r)Z(j)} + t^{ser} < \gamma_{new}$} 
			
			\State $\gamma^{gc}_j \gets \gamma^{gc}_{new}, \gamma_j \gets \gamma_{new}, \xi_j \gets i$
			\State SEL $\gets$ SEL $\cup \{j\}$ \Comment{Updating labels}

			\For{$d$ in $D$}
			\State $\tau_{OR(r)} \gets \tau^{ed}(d) + t_{OR(d)OR(r)}$
			\If{$\tau^{ed}(r) \le \tau_{OR(r)}  \le \tau^{ld}(r)$}
			\If{ $\tau_{OR(r)} + t_{OR(r)Z(j)} + t^{ser} \le \gamma_{j}$  \Comment{checking time constraints}
				\State $\text{and} \  \tau_{OR(r)} + t_{OR(r), Z(j)} + t^{ser} + t_{Z(j), DS(d)} \le\tau^{la}(d)$}
			
			\State $\mathcal{M} \gets \mathcal{M} \cup \{(r, d, j, \xi)\}$ 
			
			\EndIf
			\EndIf
			
			\EndFor
			\EndIf
			\EndFor
			
			\EndWhile 
			\EndFor    
			\Return $\mathcal{M}$
			\EndProcedure        
		\end{algorithmic}
	\end{algorithm}

	To illustrate the use of above algorithm, consider an example network given in \figurename{\ref{fig:trips}}. In this graph, 
	\begin{align*}
	{N_R} = & \{OR(r), OR(d), n_1, n_2, DS(r), DS(d)\}\\
	B = & \{s_1, s_2, s_3, s_4, s_5\}\\
	{N_T} = & \{u_1, u_2, u_3, u_4, u_5, u_6, u_1^{'}, u_2^{'}, u_3^{'}, u_4^{'},u_5^{'}, u_6^{'}\}    
	\end{align*}
	Assume $\tau^{ed}(d)= 0$ min, $\tau^{la}(d)= 20$ min, $\tau^{la}(r)= 40$ min, and $\tau^{ed}(r)= 10$ min.  Suppose the driver $d$ starts at $\tau^{ed}(d) = 0$ min and reach $OR(r)$ at 10 min, from where she has option to go to $n_1$ or $n_2$. However, going to $n_2$ and then $DS(d)$ will make the driver late as $\tau^{la}(d)= 20$ min. So $d$ can drop-off $r$ at $n_1$, from where rider can walk to station $s_1$. The service time in getting off and walking to the transit station is included in the access links. We assume weights to different types of transit links as  $\eta_a = 1, \eta_w = 2, \eta_v = 1$. Now there are three possible paths to go from $s_1$ to $DS(r)$, namely
	\begin{align*}
	\pi_1 = & \{(s_1, u_1), (u_1, u_1^{'}), (u_1^{'}, s_3), (s_3, u_6), (u_6, u_6^{'}), (u_6^{'}, s_4), (s_4, DS(r))\}\\
	\pi_2 = & \{(s_1, u_2), (u_2, u_2^{'}), (u_2^{'}, s_4), (s_4, DS(r))\}\\
	\pi_3 = & \{(s_1, u_3,), (u_3, u_3^{'}), (u_3^{'}, s_4), (s_4, DS(r))\}
	\end{align*}
	The time of arrival at $DS(r)$ using $\pi_1$, $\pi_2$, and $\pi_3$ are 36 min, 38 min, and 44 min. Therefore, $\pi_3$ is not feasible as time of arrival at $DS(r)$ is greater than $\tau^{la}(r) = 40$ min. We can see that $\pi_1$ is the quickest path to $DS(r)$. However, the generalized cost to reach $DS(r)$ using $\pi_1$ and $\pi_2$ are  43 min and 40 min respectively. The higher generalized cost of $\pi_1$ can be attributed to transfers and the waiting involved. In real transit networks, sometimes this can greatly affect user perception and should be incorporated in shortest path calculations. Moreover, a path involving more transfers may not be reliable and can result in more waiting time than scheduled. Therefore, the rider would prefer to take $\pi_2$ to reach her destination.
	
	\begin{figure}[h!]        
	\begin{subfigure}{\textheight}
		\begin{tikzpicture}[scale = 0.7]
		
		\node[scale=2] at (0, -1.5) {\faHome};
		\node[shape=circle,draw=black] (1) at (0,0) {\tiny $OR(d)$};
		\node[shape=circle,draw=black] (2) at (3,0) {\tiny $OR(r)$};
		\node[scale=2] at (3, -1.5) {\faHome};
		\node[shape=circle,draw=black] (3) at (6,3) {\small $n_1$};
		\node[shape=circle,draw=black] (4) at (6,-3) {\small $n_2$};
		\node[scale=1] at (8,4) {\faTrain};
		\node[shape=circle,draw=black] (5) at (8, 3) {\small $s_1$};
		\node[shape=circle,draw=black] (6) at (8,-3) {\small $s_2$};
		\node[scale=1] at (8,-4) {\faTrain};
		\node[shape=circle,draw=black] (7) at (11, 5) {\small $u_1$};
		\node[shape=circle,draw=black] (8) at (15, 5) {\small $u_1^{'}$};
		\node[shape=circle,draw=black] (9) at (13, 8) {\small $s_3$};
		\node[scale=1] at (12, 8) {\faTrain};
		\node[shape=circle,draw=black] (10) at (16, 9) {\small $u_6$};
		\node[shape=circle,draw=black] (11) at (19, 7) {\small $u_6^{'}$};
		\node[shape=circle,draw=black] (12) at (11, 3) {\small $u_2$};
		\node[shape=circle,draw=black] (13) at (17, 3) {\small $u_2^{'}$};
		\node[shape=circle,draw=black] (14) at (11, 1) {\small $u_3$};
		\node[shape=circle,draw=black] (15) at (17, 1) {\small $u_3^{'}$};
		\node[shape=circle,draw=black] (16) at (11,-2) {\small $u_4$};
		\node[shape=circle,draw=black] (17) at (17,-2) {\small $u_4^{'}$};
		\node[shape=circle,draw=black] (18) at (11,-4) {\small $u_5$};
		\node[shape=circle,draw=black] (19) at (17,-4) {\small $u_5^{'}$};
		\node[shape=circle,draw=black] (20) at (21, 4) {\small $s_4$};
		\node[scale=1] at (22, 4) {\faTrain};
		\node[shape=circle,draw=black] (21) at (21, -3) {\small $s_5$};
		\node[scale=1] at (22, -3) {\faTrain};
		\node[shape=circle,draw=black] (24) at (23, 0) {\tiny $DS(r)$};
		\node[scale=2] at (24.5, 0) {\faBuildingO};
		\node[shape=circle,draw=black] (25) at (8, 0) {\tiny $DS(d)$};
		\node[scale=2] at (9.5, 0) {\faBuildingO};
		\path [->] (1) edge node[left, yshift=3pt, xshift =7pt] {\tiny 10} (2);
		\path [->] (2) edge node[left, yshift=3pt, xshift =7pt] {\tiny 6} (25);
		\path [->] (2) edge node[left, yshift=2pt, xshift =4pt] {\tiny 5} (3);
		\path [->] (2) edge node[left, yshift=3pt, xshift =7pt] {\tiny 6} (4);
		\path [->] (4) edge node[left, yshift=2pt, xshift =4pt] {\tiny 7} (25);
		\path [->] (3) edge node[left, yshift=3pt, xshift =7pt] {\tiny 5} (25);
		\path [->, dashed] (3) edge node[left, yshift=3pt, xshift =5pt] {\tiny 2} (5);
		\path [->, dashed] (4) edge (6);
		\path [->, dotted] (5) edge node[left, yshift=3pt, xshift =5pt] {\tiny 4} (7);
		\path [->, dotted] (5) edge node[left, yshift=3pt, xshift =7pt] {\tiny 2} (12);
		\path [->, dotted] (5) edge node[left, yshift=3pt, xshift =7pt] {\tiny 2} (14);
		\path [->, dotted] (6) edge (16);
		\path [->, dotted] (6) edge (18);
		\path [->, loosely dashdotted] (7) edge node[left, yshift=3pt, xshift =7pt] {\tiny 3} (8);
		\path [->, densely dashdotted] (8) edge node[left, yshift=3pt, xshift =7pt] {\tiny 2} (9);
		\path [->, dotted] (9) edge node[left, yshift=3pt, xshift =7pt] {\tiny 1} (10);
		\path [->, loosely dashdotted] (10) edge node[left, yshift=3pt, xshift =7pt] {\tiny 4} (11);
		\path [->, loosely dashdotted] (12) edge node[left, yshift=3pt, xshift =7pt] {\tiny 12} (13);
		\path [->, loosely dashdotted] (14) edge node[left, yshift=3pt, xshift =7pt] {\tiny 15} (15);
		\path [->, loosely dashdotted] (16) edge (17);
		\path [->, loosely dashdotted] (18) edge(19);
		\path [->, dashed] (11) edge node[left, yshift=3pt, xshift =7pt] {\tiny 2} (20);
		\path [->, dashed] (13) edge node[left, yshift=3pt, xshift =7pt] {\tiny 4} (20);
		\path [->, dashed] (15) edge node[left, yshift=3pt, xshift =7pt] {\tiny 7} (20);
		\path [->, dashed] (17) edge (21);
		\path [->, dashed] (19) edge (21);
		\path [->, dashed] (20) edge node[left, yshift=3pt, xshift =7pt] {\tiny 3} (24);
		\path [->, dashed] (21) edge (24);
		\draw[-, line width=0.25mm,blue] (0, 0.8) -- (8, 5);
		\draw[-, line width=0.25mm,blue] (0, -0.8) -- (8, -2);
		\draw[-, line width=0.25mm,blue] (8, -0.8) -- (3, -3);
		\draw[-, line width=0.25mm,blue] (8, 0.8) -- (6, 5);
		\draw[-, line width=0.25mm, green] (3.5, -0.5) -- (18,-4);
		\draw[-, line width=0.25mm,green] (21, 3.5) -- (6, -2);
		\draw[-, line width=0.25mm, green] (21, 4.5) -- (16, 13.5);
		\draw[-,line width=0.25mm, green] (2.2, 0.4) -- (18, 13.5);
		\node[font=\fontsize{8}{10}, blue] at (3, 3) {$STP(d)$};
		\node[font=\fontsize{8}{10}, green] at (17.5, 12.1) {$STP(r)$};
		
		\end{tikzpicture}
	\end{subfigure}
	
	\begin{subfigure}{\textheight}
		\begin{tikzpicture}[scale=0.6, every node/.style={scale=0.6}]
		\draw[->] (0,0) -- (2,0);    
		\node[text width=4cm] at (4,0) {Road link};
		\draw[->, dashed] (5,0) -- (7,0);
		\node[text width=4cm] at (9,0) {Walking link};
		\draw[->, dotted] (11,0) -- (13,0);
		\node[text width=4cm] at (15,0) {Waiting link};
		\draw[->, loosely dashdotted] (17,0) -- (19,0);
		\node[text width=4cm] at (21,0) {In-vehicle link};
		\draw[->, densely dashdotted] (23,0) -- (25,0);
		\node[text width=4cm] at (27,0) {Transfer link};
		
		\end{tikzpicture}
	\end{subfigure}
	
	\caption{An example network to illustrate the processing of Algorithm~\ref{alg:sbtsp}}
	\label{fig:trips}
	\end{figure}
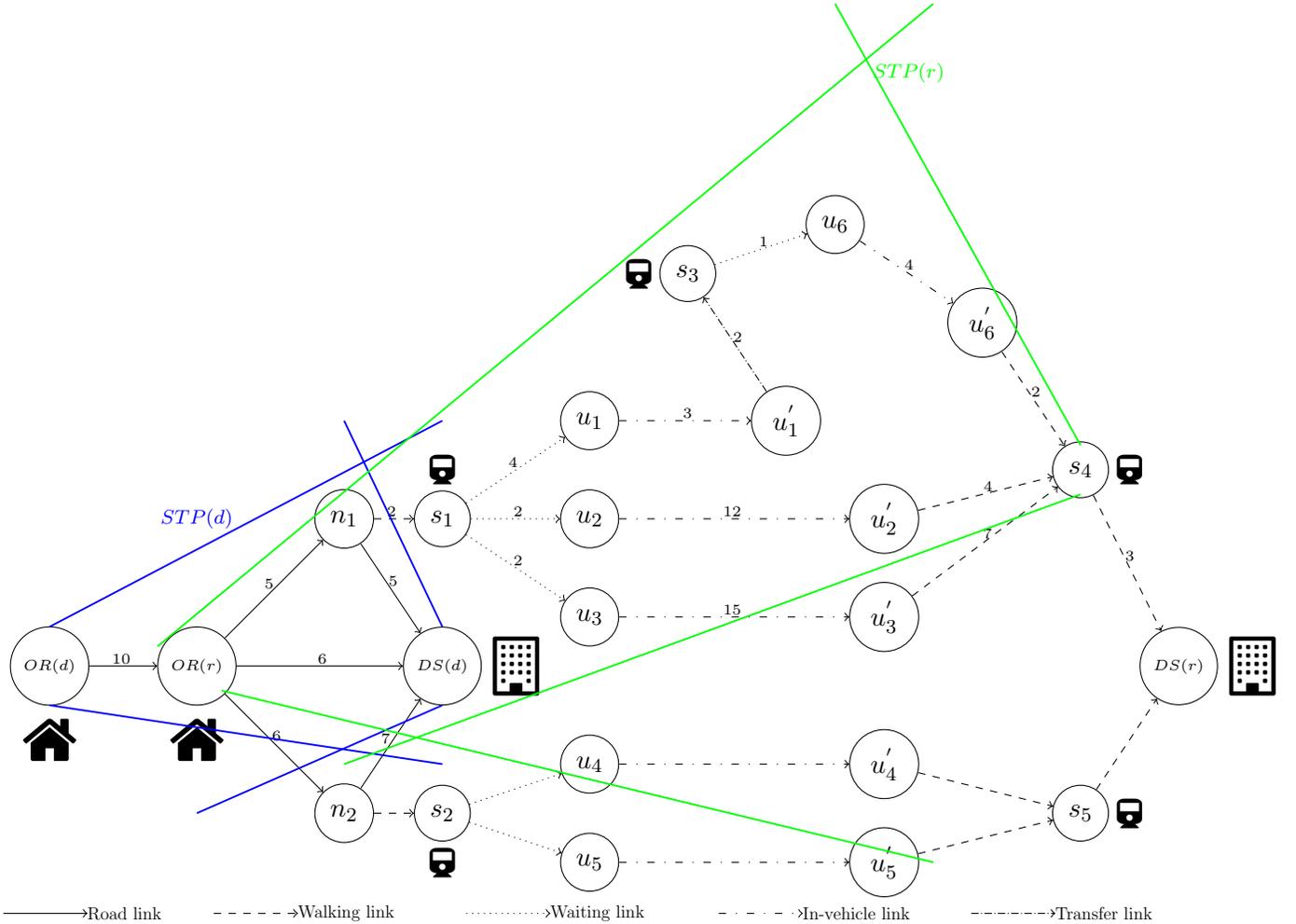

	\begin{prop}\label{prop:3}
		Algorithm~\ref{alg:sbtsp} terminates after a finite number of iterations and produces optimal transit path for riders.
	\end{prop}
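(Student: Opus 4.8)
The plan is to regard Algorithm~\ref{alg:sbtsp} as a Dijkstra-type label-setting procedure on the generalized-cost weighting of the transit network $G_T$, and to prove the two assertions of the proposition — termination and optimality of the returned itinerary — in that order. For \textbf{termination}, I would first observe that the outer loop ranges over the finite set $R$ and the embedded driver loop over the finite set $D$, so it suffices to bound the \textbf{while} loop for one fixed rider $r$. Since the weights $\eta_a,\eta_v,\eta_w$ are strictly positive and every travel time $t_{ji}\ge 0$, each generalized-cost relaxation on lines~11--16 is nonnegative; together with the extraction rule that always removes the node of smallest $\gamma^{gc}$ from $SEL$, a standard argument then shows the sequence of extracted labels is non-decreasing and no node is ever re-inserted after it has been extracted: if a later-extracted node $m$ offered $j$ a strictly smaller value $\gamma^{gc}_m+\eta\,t_{jm}<\gamma^{gc}_j$, then $\gamma^{gc}_m\ge\gamma^{gc}_j$ and $\eta\,t_{jm}\ge 0$ yield a contradiction. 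Hence the \textbf{while} loop performs at most $|N_T|$ extractions, each doing $\mathcal{O}(|\Gamma^{-1}(i)|\cdot|D|)$ work, so the algorithm halts after finitely many iterations; the feasibility clause on line~18 can only \emph{suppress} updates, so it cannot break this.

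For \textbf{optimality}, the claim to establish is that at termination $\gamma^{gc}_n$ equals the least generalized cost of a schedule-feasible transit path from $n$ to $DS(r)$ (and is $\infty$ when none exists), and that the predecessor pointers $\xi$ recover such a path, so that the itinerary $\mathcal{I}$ read off from the drop-off node is optimal. Here the Bellman optimality condition~\eqref{eq:bpo} is the engine. I would argue by induction on the order of permanent extraction that $\gamma^{gc}_i$ is optimal at the instant $i$ leaves $SEL$: the base case is $i=DS(r)$ with $\gamma^{gc}=0$, and for the step, any schedule-feasible path from $i$ uses some first arc $(i,i')$, so by~\eqref{eq:bpo} its cost is at least $\eta\,t_{ii'}$ plus the optimal cost from $i'$; since $i'$ has a strictly smaller label it was extracted earlier and has already relaxed $(i,i')$ into $\gamma^{gc}_i$, so $\gamma^{gc}_i$ cannot exceed that lower bound. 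The time labels $\gamma_i$ are written in lockstep with $\gamma^{gc}_i$, so along the chosen minimum-cost continuation $\gamma_i$ records the latest instant at which $r$ may leave node $i$ and still reach $DS(r)$ by $\tau^{la}(r)$ — exactly the quantity consumed by the feasibility tests on line~18 and in the driver-checking block ending at line~23. Reading back along $\xi$ from the drop-off node then produces the asserted optimal itinerary.

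The delicate point — and the one I expect to require the most care — is showing that the second clause of the conditional on line~18 is a \emph{sound} restriction, i.e., that suppressing the update at $j$ whenever $\tau^{ed}(r)+t_{OR(r)Z(j)}+t^{ser}\ge\gamma_{new}$ neither discards a path that could complete a feasible match nor lets a cheaper-but-infeasible continuation block a costlier-but-feasible one. For the first part, $\tau^{ed}(r)+t_{OR(r)Z(j)}+t^{ser}$ is the earliest time $r$ can conceivably be at $Z(j)$ over \emph{all} drivers and admissible pickups (it assumes $r$ departs $OR(r)$ at the earliest permitted instant and is carried directly to $Z(j)$), so if even this optimistic time already exceeds the latest feasible departure $\gamma_{new}$ from $j$, the branch is genuinely useless and removing it changes neither the set of feasible matches nor the optimal cost among the survivors; the same optimistic-lower-bound reasoning (driving time bounding transit time between the corresponding road nodes) is what licenses the Bellman-type cutoff $\tau^{ed}(r)+t_{OR(r)Z(j)}+t^{ser}<\gamma_{new}$ without losing optimal feasible paths. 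For the second part, I would note that a rejected relaxation leaves $\gamma^{gc}_j$ and $\xi_j$ untouched, so a later feasible relaxation of $j$ — even with larger generalized cost — is still eligible to write its label when its successor is processed; combined with the non-decreasing extraction order this shows the surviving optimum is actually reached, and one must additionally check the boundary behaviour while $\gamma^{gc}_j=\infty$. Assembling the termination bound, the Dijkstra/Bellman induction, and the safety of the pruning then yields the proposition.
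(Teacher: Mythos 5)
Your first two paragraphs are correct and follow essentially the same route as the paper's (much terser) proof: termination from the finiteness of $R$ and $D$ plus the standard Dijkstra argument that non-negative link costs and minimum-label extraction prevent re-insertion, and optimality of the generalized-cost labels from Bellman's condition \eqref{eq:bpo} via induction on the extraction order. That part needs no adjustment, and your observation that the feasibility clause on line~18 can only suppress updates and therefore cannot threaten termination is a point the paper leaves implicit.

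Your third paragraph, however, overclaims. The assertion that the pruning never ``lets a cheaper-but-infeasible continuation block a costlier-but-feasible one'' is false in general, and the paper itself concedes this in Remark~\ref{rem:1}: feasible solutions are \emph{not} cut off only when $\eta_a=\eta_v=\eta_w=1$. The failure mode your argument misses is not the case where a relaxation is rejected outright (there, as you say, nothing is written and a later relaxation can still succeed). It is the case where a generalized-cost-cheaper path \emph{does} pass the optimistic lower-bound test at $j$ and writes both $\gamma^{gc}_j$ and the time label $\gamma_j$, but its $\gamma_j$ is \emph{earlier} (a less generous latest-departure time) than that of a costlier alternative path from $j$ to $DS(r)$. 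Because the algorithm keeps a single label per node keyed on generalized cost, only the cheap path's time label propagates; downstream nodes or the driver checks at lines~23--25 may then fail on a time test that the costlier continuation would have passed. Closing this would require a bicriteria (Pareto) labeling, which the algorithm does not do. The resolution is that the proposition's ``optimal transit path'' should be read, as the paper reads it, as optimality of the labels with respect to \eqref{eq:bpo} over the paths the search retains, with the loss of time-feasible alternatives under non-unit weights acknowledged separately in Remark~\ref{rem:1} rather than refuted. Your first lower-bound argument (that $\tau^{ed}(r)+t_{OR(r)Z(j)}+t^{ser}$ is an optimistic earliest arrival at $Z(j)$ over all drivers, so nodes failing it are genuinely useless) is sound and is the part of the pruning that can actually be defended.
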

	\begin{proof}
		As the number of riders $\vert R \vert $ and drivers $\vert D \vert$ are finite in the program, the number of iterations by two \emph{for} loops (Line 3 and 21) should be finite ($\vert R \vert \vert D \vert$). While finding an optimal itinerary for a rider up to current node $j$, the labels cannot be unbounded as link costs are non-negative and any node enters SEL only once. Therefore, the algorithm terminates in the finite number of iterations. Also, the algorithm satisfies (\ref{eq:bpo}), hence the labels should be optimal.
	\end{proof}
	
	\begin{rem}\label{rem:1}
		Algorithm~\ref{alg:sbtsp} will not cut off any feasible solution if and only if $\eta_a= \eta_w =  \eta_v= 1$
	\end{rem}
    Consideration of varying weights to different links may cut off feasible solutions that are not attractive to the passenger. But if we want to consider all the feasible solutions, the coefficients should be $\eta_a= \eta_w =  \eta_v= \eta_t = 1$. For practical purposes, this input should be user-specified.

\begin{prop}\label{prop:4}
	The worst case computational complexity of Algorithm~\ref{alg:sbtsp} is $\mathcal{O}(\vert R\vert (\vert D \vert $ $\vert A_T \vert\log\vert N_T \vert + \vert N_T\vert \log\vert N_T\vert))$.
\end{prop}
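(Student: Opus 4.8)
The plan is to charge every operation to the three nested loops of Algorithm~\ref{alg:sbtsp} --- the rider loop on Line~3, the edge-scan loop on Line~10, and the driver loop on Line~21 --- together with the priority-queue operations of the per-rider backward search. Since the rider loop runs exactly $|R|$ times and its iterations are independent, it suffices to bound the cost of processing one rider $r$ and then multiply by $|R|$.

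Fix $r$. The body of the rider loop is a Dijkstra-type backward label-setting search on $G_T$ from the anchor $DS(r)$ in which the generalized-cost labels $\gamma^{gc}$ serve as priority-queue keys; its correctness (Bellman's principle \eqref{eq:bpo}) is the content of Proposition~\ref{prop:3}. I would first record the usual Dijkstra accounting: with a binary-heap priority queue supporting decrease-key, each of the $O(|N_T|)$ nodes is extracted once at Line~8, contributing $O(|N_T|\log|N_T|)$; over the whole search the edge-scan loop on Line~10 is entered $\sum_{i\in N_T}|\Gamma^{-1}(i)| = |A_T|$ times, each pass doing $O(1)$ work to evaluate Lines~11--18; and a pass whose relaxation test on Line~18 succeeds additionally performs one heap update (Line~20) in $O(\log|N_T|)$. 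The key step is then to bound the number of scans that pass Line~18: each such event is a strict improvement of a label $\gamma^{gc}_j$, hence a decrease-key, and a Dijkstra run performs at most $|A_T|$ of them, so Line~18 succeeds at most $|A_T|$ times. Finally, each success triggers one full pass of the driver loop on Line~21, whose body --- forming $\tau_{OR(r)}$ by an $O(1)$ lookup in the stored road shortest-path trees, the constant-time feasibility tests on Lines~23--26, and at most one insertion into $\mathcal{M}$ --- costs $O(1)$, so a pass of the driver loop costs $O(|D|)$.

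Summing for one rider gives $O(|N_T|\log|N_T|)$ for extractions, $O(|A_T|\log|N_T|)$ for heap updates, $O(|A_T|\,|D|)$ for driver-loop passes, and $O(|N_T|)$ for the initialization on Lines~4--6; since $|D|\ge 1$ and $\log|N_T|\ge 1$, the middle two terms are both dominated by $|D|\,|A_T|\log|N_T|$, so the per-rider cost is $O(|D|\,|A_T|\log|N_T| + |N_T|\log|N_T|)$, and multiplication by $|R|$ yields the claimed $\mathcal{O}(|R|(|D|\,|A_T|\log|N_T| + |N_T|\log|N_T|))$. I would close by remarking that the extra pruning conjunct on Line~18 and the compatibility conditions on Lines~23--26 can only discard candidate edges and matches, so they never worsen the worst-case count. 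The one place the argument needs care is the step that bounds the label improvements: one must check that, despite the unconditional re-insertion into SEL on Line~20, non-negativity of the weights $\eta_a,\eta_v,\eta_w$ preserves the Dijkstra invariant (each node permanently labelled at its first extraction), so that the number of improvements --- and hence of driver-loop invocations --- stays $O(|A_T|)$ rather than growing as in a Bellman--Ford-style relaxation.
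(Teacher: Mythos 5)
Your proof is correct and follows essentially the same route as the paper: a per-rider Dijkstra accounting with a binary heap ($\mathcal{O}(\vert N_T\vert\log\vert N_T\vert)$ extractions plus $\mathcal{O}(\vert A_T\vert\log\vert N_T\vert)$ updates), the inner driver loop charged $\mathcal{O}(\vert D\vert)$ per successful relaxation, and multiplication by $\vert R\vert$ for the outer loop. Your bookkeeping is in fact slightly more careful than the paper's --- you separate the $\mathcal{O}(\vert A_T\vert\,\vert D\vert)$ driver-loop cost from the heap cost and justify why the number of driver-loop invocations is bounded by $\vert A_T\vert$ --- but this only tightens the same argument rather than changing it.
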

\begin{proof}
	Lines 2, 4-6 can be done in $\mathcal{O}(1)$ time. Assuming that the shortest path algorithm is implemented using Binary heap data structure, Line 7-20 consists of two major steps, finding the node $i$ with minimum label $\gamma^{gc}$ from SEL (Line 8), which can be done in $\mathcal{O}(\vert N_T\vert \log(\vert N_T\vert))$ and updating labels of new nodes which can be done in $\mathcal{O}( \vert A_T \vert\log(\vert N_T\vert))$ time. While updating labels of each node $j$, they are checked for possible drop-off location for rider. The  space-time prism constraints are incorporated in Line 18 and 22-25, can be done in $\mathcal{O}(1)$ time and are repeated for each driver $d \in D$. So, updating of labels should be multiplied with the cardinality of set $D$. Also, the shortest path algorithm is run for each $r \in R$, the overall complexity should be multiplied by the cardinality of set $R$. Hence, the worst case computational complexity of Algorithm~\ref{alg:sbtsp} is $\mathcal{O}(\vert R\vert (\vert D \vert \vert A_T \vert\log \vert N_T\vert + \vert N_T\vert \log\vert N_T\vert))$.
\end{proof}

\begin{rem}
	If $\vert D \vert \vert A_T \vert = \Omega(\vert N_T\vert)$, then the worst case computational complexity of Algorithm~\ref{alg:sbtsp} can be given as $\mathcal{O}(\vert R\vert \vert N_T\vert \log\vert N_T\vert))$. However, the actual cost will be much lower than the worst case. For example, a small number of nodes will be added to SEL due to the strict constraint in line 18.
\end{rem}

\subsubsection{Last mile access problem}\label{sec:lm}
In this case, a passenger can access transit for the first mile but cannot reach her destination due to the unavailability of transit service to complete the last mile. This is a common problem faced by commuters going back to home from a high-density area to a low-density area. To solve this problem, a driver also going in the same direction can pick up the passenger from alighting stop and drop her off at the respective destination. Algorithm~\ref{alg:sbtsp} can be modified by running the forward shortest path from $OR(r)$ and checking if driver $d$ driving from $OR(d)$ to a transit node $j$ can reach before bus/train arrival time at $j$, drive from that station to $DS(r)$ and then $DS(d)$ without violating the time constraints. Specifically, line 23-25 in Algorithm~\ref{alg:sbtsp} can be replaced with the following constraints:
\begin{eqnarray}
\tau^{ed}(d) + t_{OR(d)Z(j)} \le \gamma_{new}\\
\gamma_{new} + t_{Z(j)DS(r)} \le \tau^{la}(r)\\
\gamma_{new} + t_{Z(j)DS(r)} + t_{DS(r)DS(d)}  \le \tau^{la}(d)
\end{eqnarray}

\subsection{Optimal assignment of drivers to riders}\label{sec:optp}
The second step of the procedure is to find an optimal match between riders and drivers among the feasible matches found by Algorithm~\ref{alg:sbtsp}. This is formalized as an Integer Linear Program (ILP). The goal of this optimization program is to find optimal matches out of feasible matches that optimize a given objective. The optimization problem is described below:\\

\subsubsection{Variables} 
Let $\mathcal{M}_r= \{m \in \mathcal{M} : r \in m\}$ be the set of all feasible matches of $r$ present in $\mathcal{M}$. Similarly, $\mathcal{M}_d= \{m \in \mathcal{M} : d \in m\}$ be the set of all the feasible matches in $\mathcal{M}$ involving $d \in D$. The set of feasible matches can be seen as edges between riders and drivers in a \textit{bipartite graph}, in which a rider can appear more than once with different drop off location and itinerary (Figure \ref{fig:bipartite}). Let us assume a binary variable $\epsilon_{k}$, which takes the value 1, if edge $k \in \mathcal{M}$ is selected,  and 0 otherwise. Let $t_k^{drive}, t_k^{transit}, t_k^{walk}, t_k^{\#trans}, t_k^{wait}, \ \text{and}  \ t_k^{vhrs}$ be the driving time, transit time, walking time, number of transfers, waiting time, and vehicle-hrs savings associated with edge $k \in \mathcal{M}$. These parameters can be used to define different objectives for the matching optimization program. The vehicle-hrs savings $t_k^{vhrs}$ can be computed by subtracting $t_k^{drive}$ from sum of driving time when there is no ridesharing (\ref{eq:vhrs}).
\begin{equation}\label{eq:vhrs}	
t_k^{vhrs} =  t_{OR(r)DS(r)} + t_{OR(d)DS(d)} - t_k^{drive} \  \text{for} \  r, d \in k \quad \forall k \in \mathcal{M}
\end{equation}

\begin{figure}[h!]		
	\centering
	\begin{tikzpicture}[]		
	\node[shape=circle,draw=black] (1) at (0,0) {$d_1$};
	\node[shape=circle,draw=black] (2) at (5,0) {$r_1, n_1$};
	\node[shape=circle,draw=black] (3) at (5,-2) {$r_1, n_2$};
	\node[shape=circle,draw=black] (4) at (5,-4) {$r_1, n_3$};
	\node[shape=circle,draw=black] (5) at (0,-5) {$d_2$};
	\node[shape=circle,draw=black] (6) at (5,-6) {$r_2, n_1$};
	\node[shape=circle,draw=black] (7) at (5,-8) {$r_2, n_2$};
	\node[shape=circle,draw=black] (8) at (5,-10) {$r_3, n_4$};
	\path [->] (1) edge node[left, yshift=7pt, xshift =13pt] {$t^{vhrs}_{d_1, r_1, n_1}$} (2);
	\path [->] (1) edge node[left, yshift=7pt, xshift =23pt] {$t^{vhrs}_{d_1, r_1, n_2}$} (3);
	\path [->] (1) edge node[left, yshift=3pt, xshift =12pt] {$t^{vhrs}_{d_1, r_1, n_3}$} (4);
	\path [->] (5) edge node[left, yshift=0pt, xshift =16pt] {$t^{vhrs}_{d_2, r_1, n_2}$} (3);
	\path [->] (5) edge node[left, yshift=0pt, xshift =16pt] {$t^{vhrs}_{d_2, r_1, n_1}$} (2);
	\path [->] (5) edge node[left, yshift=0pt, xshift =16pt] {$t^{vhrs}_{d_2, r_2, n_1}$} (6);
	\path [->] (5) edge node[left, yshift=0pt, xshift =16pt] {$t^{vhrs}_{d_2, r_3, n_4}$} (8);
	\path [->] (5) edge node[left, yshift=7pt, xshift =13pt] {$t^{vhrs}_{d_2, r_2, n_2}$} (7);
	\draw [decorate,decoration={brace,amplitude=10pt,
		,raise=4pt},yshift=0pt]
	(5.5,1) -- (5.5,-5) node [black,midway,xshift=0.8cm] {\footnotesize
		$r_1$};
	\draw [decorate,decoration={brace,amplitude=10pt,
		,raise=4pt},yshift=0pt]
	(5.5,-5.5) -- (5.5,-8.5) node [black,midway,xshift=0.8cm] {\footnotesize
		$r_2$};
	\draw [decorate,decoration={brace,amplitude=10pt,
		,raise=4pt},yshift=0pt]
	(5.5,-9.5) -- (5.5,-10.5) node [black,midway,xshift=0.8cm] {\footnotesize
		$r_3$};
	
	\end{tikzpicture}
	\caption{An illustration of bipartite graph in case of transit-based ridematching}
	\label{fig:bipartite}
\end{figure}
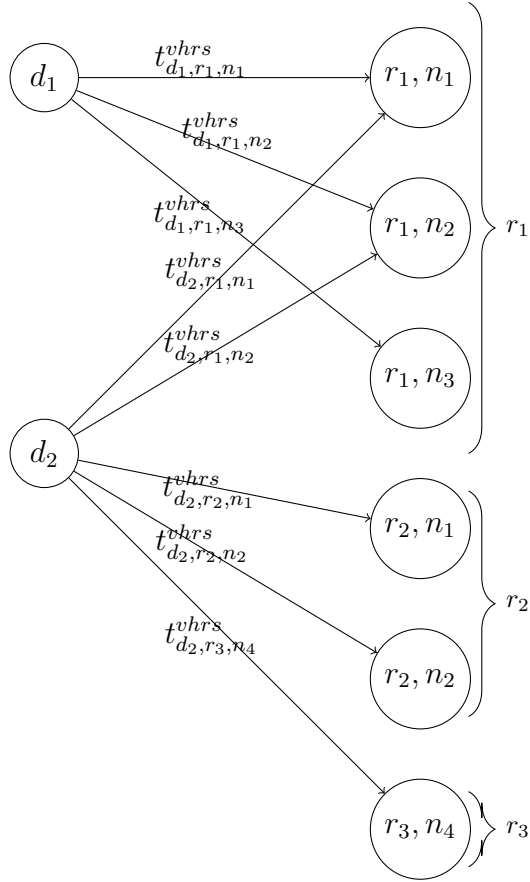

\subsubsection{Objective}  
There can be several objectives for matching riders and drivers. We consider two different objectives for this problem:

\begin{enumerate}
	\item Maximize total number of matches, $ Z_1 = \sum_{k \in \mathcal{M}}  \epsilon_k$
	\item Maximize veh-hrs savings, $ Z_2 = \sum_{k \in \mathcal{M}} t_k^{vhrs} \epsilon_k$
\end{enumerate}

\subsubsection{Constraints}  
There are two types of constraints which are given below:
\begin{enumerate}
	\item Each rider should be assigned to at most one driver.
	\begin{equation}
	\sum_{k \in \mathcal{M}_r} \epsilon_k \le 1 \quad \forall r \in R
	\end{equation}
	\item Each driver should be assigned to at most one rider.
	\begin{equation}
	\sum_{k \in \mathcal{M}_d} \epsilon_k \le 1 \quad \forall d \in D
	\end{equation}
\end{enumerate}
The overall optimization problem to optimally assign riders with drivers can ve written as follows:
\begin{equation} 
\begin{aligned}
& \underset{\epsilon}{\text{minimize}}
& & Z \\
& \text{subject to}
& & \sum_{k \in \mathcal{M}_r} \epsilon_k \le 1 \quad \forall r \in R\\
& & & \sum_{k \in \mathcal{M}_d} \epsilon_k \le 1 \quad \forall d \in D\\
& & & \epsilon_{k} = \{0, 1\} \quad \forall k \in \mathcal{M} \label{eq:10}
\end{aligned}
\end{equation}
\begin{rem}
	The solution obtained by LP relaxation of (\ref{eq:10}) produces an integral solution.
\end{rem}
\begin{proof} (Informal)
    Due to the repetition of riders in different nodes, the bipartite graph in this case (Figure \ref{fig:bipartite}) is different than the graph considered in a classic bipartite matching. However, the solution obtained by LP relaxation of (\ref{eq:10}) still produces an integral solution. This is because the matching polytope still satisfies the totally unimodularity condition. As each edge is unique, each column of the matrix produced by the set of constraints will still have at most two non-zero values. Of course, each entry of the matrix would be either 0 or 1, and rows of the matrix can still be partitioned into two sets with non-zero values in separate sets.
\end{proof}

	\subsection{A rolling horizon algorithm for dynamic transit-based ridesharing}\label{sec:rh}
The ridesharing matching problem is dynamic in nature as requests arrive continuously overtime throughout the day. This requires solving the given problem several times during the day since future requests are unknown and drivers and riders enter and leave the system continuously over time. This type of uncertainty is usually managed using \textit{rolling horizon strategy} (\citet{Agatz2011, Wang2017}). Using this strategy, the matching problem is solved in several instances, and matches are not finalized until necessitated by the deadline. The frequency of optimization can be determined using two approaches-\textit{event driven} or \textit{fixed time step} approach (\citet{Najmi2017}). The event-driven approach initiates optimization at each time a new request is received. However, this may lead to synchronization issues if a new request is received before solving the previous problem.  For the framework proposed in this study, it is efficient to use a fixed time-step approach with predefined time steps. Pseudocode for using a rolling horizon approach to solve transit-based ridesharing matching problem is proposed in Algorithm~\ref{alg:rh}. The algorithm starts with initializing set $D^{sys}(\tau)$ and $R^{sys}(\tau)$ representing driver and rider requests currently in the system. At each time step $\kappa$, we consider all new rider and driver requests ($D^{new}$ and $R^{new}$ respectively) received by the platform during the time interval of length $\sigma$. We add these new requests to the set of existing requests $D^{sys}(\tau)$ and $R^{sys}(\tau)$. We then maintain a set of feasible match $\mathcal{M}$ for $d \in D^{sys}(\tau)$ and $r \in R^{sys}(\tau)$. To make it computationally efficient, we run Algorithm~\ref{alg:sbtsp} for new riders and drivers only (line 12-13). The new feasible matches found using Algorithm~\ref{alg:sbtsp} are added to the set $\mathcal{M}$. Then, an optimization problem is called to decide a set of optimal matches $\mathcal{M^{*}}$ for given feasible matches $\mathcal{M}$. After this, a subset of matches  $\mathcal{M}^{\text{fin}} \subseteq \mathcal{M^{*}}$ are finalized for the requests which are expiring in next time step. A lead time $\mu$ is also subtracted from $\tau^{ld}(p)$ to allow some flexibility to riders and drivers to get ready for the trip.  The requests which are finalized and the ones which did not found any optimal matches and expiring in the next step are also removed from the system. Then, the horizon is rolled over to the next time step and the process is repeated. The results of a simulation experiment are given in  \cref{sec:rhe}

\begin{algorithm}[H]
	\caption{Pseudocode for dynamic ridesharing using rolling horizon policy}
	\label{alg:rh}
	\begin{algorithmic}[1]
		\Procedure{RollingHorizon}{}
		\State $D^{sys}(\tau) \gets \{\}$ \Comment{set of drivers in the system at time $\tau$} 
		\State $R^{sys}(\tau) \gets \{\}$ \Comment{set of riders in the system at time $\tau$} 
		\State $\mathcal{M}(\tau) \gets \{\}$ \Comment{set of feasible matches at time $\tau$} 
		\State $\mathcal{M}^{\text{fin}}(\tau) \gets \{\}$ \Comment{set of final optimal matches at time $\tau$} 
		\State $\tau \gets 0$
		\For{$\kappa \in \{1, 2, 3, ..., n\}$} \Comment{time steps} 
		\State $R^{new} \gets \{r \in R| \tau \le \tau^{at}(r) \le \tau + \kappa\sigma\}$ \Comment{set of new riders in time step $k$} 
		\State $D^{new} \gets \{d \in D| \tau \le \tau^{at}(d) \le \tau + \kappa\sigma\}$ \Comment{set of new drivers in time step $k$} 
		\State $D^{sys} \gets D^{sys} \cup D^{new}$
		\State $R^{sys} \gets R^{sys} \cup R^{new}$
		\State $\mathcal{M} \gets \mathcal{M} \cup \Call{RiderDriverPotentialMatch}{R^{new}, D^{sys}, \mathcal{G_R}, \mathcal{G_T}, \eta}$
		\State $\mathcal{M} \gets \mathcal{M} \cup \Call{RiderDriverPotentialMatch}{R^{sys}, D^{new}, \mathcal{G_R}, \mathcal{G_T}, \eta}$
		\State $\mathcal{M}^{*} \gets \Call{\textsc{OPTIMIZE}}{\mathcal{M}}$ \Comment{set of current optimal matches} 
		\For{$(r, d, j, \xi) \in \mathcal{M}^{*}$}
		\If{$\tau^{ld}(r) - (\tau + \kappa\sigma) \le \mu$ \Or $\tau^{ld}(d) - (\tau + \kappa\sigma) \le \mu$}
		\State $\mathcal{M}^{\text{fin}} \gets \mathcal{M}^{\text{fin}} \cup \{(r, d, j, \xi)\}$
		\State $\mathcal{M} \gets \mathcal{M} \backslash \{m \in \mathcal{M} \ | \ (r, d, j, \xi) \in \mathcal{M}^{*}  \ \& \  r \in m \ \&  \ d \in m \}$
		\State $D^{sys} \gets D^{sys}\backslash \{d\}, R^{sys} \gets R^{sys}\backslash \{r\}$
		\EndIf
		\EndFor
		\State $\tau \gets \tau  + \kappa\sigma$
		\EndFor		
		\EndProcedure        
	\end{algorithmic}
\end{algorithm}

		\section{Numerical experiments}\label{sec:exp}
    In this section, we present results from simulation experiments to get more insights and analyze the benefits of the proposed transit-based ridesharing program. We implemented Algorithm~\ref{alg:sbtsp} and \ref{alg:rh} in Python 2 on a standard PC with Core i5 3.00 GHz processor with 16 GB RAM.        
	\subsection{Simulation setup}
    The simulation environment is based on the 2010 activity-based travel demand model for Twin Cities, MN developed by Metropolitan Council (\citet{metcouncilabm}). The Twin Cities metropolitan region is spread over 8,120 mi$^2$ area with a population of about 3.55 million. There are more than 11 million trips occurring every day. The activity-based model takes input from travel behavior surveys and outputs simulated trips for the entire metropolitan region. The model output used for this study consists of trip origin, destination, activity purpose, the mode used, departure time, and arrival time. For this ridesharing program, we consider only the commuting and school trips during morning peak hours (6:00 AM to 9:00 AM). The departure and arrival time of each trip in this data is available on a 30-minute scale, so we added a uniform random number between 0 to 30 minutes to represent the variation in the departure time of trips. The Twin Cities transportation network has 3030 traffic analysis zones (TAZs), 23,812 nodes, and 56,688 links. Using a static traffic assignment program (\citet{pramesh_kumar_2017_1045573}), we calculated the value of equilibrium travel time on links for an hourly demand obtained from the above model. For each trip, the value of the shortest path travel time is increased by a certain proportion, known as \textit{time flexibility}. The latest arrival time $\{\tau^{la}(p), p \in P\}$ at passengers' destination are obtained by adding this increased travel time to the earliest departure time $\{\tau^{ed}(p), p \in P\}$. Then a uniform random number between 0 and 60 minutes is further subtracted from $\{\tau^{ed}(p), p \in P\}$ to calculate the value of announcement time $\{\tau^{at}(p), p \in P\}$ of trips. After this, unlike other studies, we explicitly selected trips with the transit first-mile access problem (i.e., no transit stop within a buffer distance of 0.75 mi). This is important to assess the efficiency of a matching algorithm designed to solve the first-mile problem. The driver or rider role to a given trip is also assigned randomly based on the driver-rider ratio. Other specifications are given in Table~\ref{tab:2}.\\
	
	\begin{table}[h!]
		\centering
		\caption{Characteristics of simulation setup}
		\label{tab:2}
		\begin{tabular}{ll}
			\hline        
			Trips & Commuting and school trips\\
			& having first mile access problem\\            
			Driver-rider ratio &    1\\
			Acceptable walking distance to/from bus stop &    0.75 mi\\
			Acceptable transfer distance for walking  &    0.25 mi\\
			
			Walking speed of riders &    3 mi/hr\\            
			Maximum wait time at transit stop &    10 min\\
			$t^{ser}$ &    2 min\\
			$\eta$ &    \{1.5, 2.0, 1.0, 2.0\}\\
			Vehicle capacity &    2 seats\\
			\hline    
		\end{tabular}
	\end{table}

        Metro Transit is the primary transit agency in the Twin Cities region, offering an integrated network of buses, light rails, and a commuter train. The schedule-based transit network is created using the GTFS data obtained from \cite{gtfsmn} with the help of an R script (\cite{pramesh_kumar_2019_2531085}). The script produces different types of transit network links ${A_T}$ based on the set of pre-defined parameters. The time to traverse an access link $(i, j)$ is calculated by dividing the distance between $i$ and $j$ and speed of walking, which we considered as 3 mi/hr. The Metro Transit GTFS feed contains 191 routes, 13,672 stops, and 9,042 vehicle trips. The number of different types of transit network links generated is given in Table~\ref{tab:3}. The standard implementation of schedule-based transit shortest path using binary heaps is also shared as open-source code (\cite{alireza_khani_2018_1341312}).\\

	\begin{table}[h!]
		\centering
		\caption{Twin Cities schedule-based transit network}
		\label{tab:3}
		\begin{tabular}{ll}
			\hline        
			Number of nodes    &    490,112\\
			Number of in-vehicle links &        481,065\\
			Number of waiting transfer links         &    241,187\\
			Number of walking transfer links  &        5,082,747\\
			Number of access/egress links &        8,496,605 \\
			Total number of links          &    14,301,604 \\
			Computational time to generate transit network    &    8 min\\            
			\hline    
		\end{tabular}
	\end{table}

    In the following subsections, we first present the results for a static matching problem, in which case the requests are known before the start of the day. Then, a dynamic simulation experiment results using the rolling horizon policy is presented in \cref{sec:rhe}. The sensitivity analysis of different parameters, such as participation rate, rider, and driver time flexibility, and the driver-rider ratio is also presented. The results are computed for different objective functions given in \cref{sec:optp}.
	
	\subsection{Static simulation}
    For this experiment, the driver-rider ratio is equal to 1.0, i.e., an equal number of drivers and riders in the system and a participation rate of 1\% (1,087 riders and 1,092 drivers) among travelers facing the transit first-mile access problem. We provide time flexibility of 40\% to the drivers and 80\% to the riders as transit travel time is likely to be higher than the auto travel time. To avoid any bias, we draw five different samples and calculate the average of the results obtained. The results obtained using different samples with different matching objectives are given in Table~\ref{tab:4}. We can observe that for 1,088 drivers and 1,097 riders, Algorithm~\ref{alg:sbtsp} took an average of 7.7 minutes and produced 12,338 feasible matches. It only took a fraction of a second to solve the matching optimization problem (\ref{eq:10}). The average number of optimal matches found using objective $Z_1$ exceeds objective $Z_2$ by a small number. However, the average veh-hrs savings found using objective $Z_2$ is significantly more than the value found using $Z_1$. As $Z_2$ is about maximizing veh-hrs savings, the average detour time of drivers found using objective $Z_1$ is more than $Z_2$ by 2 minutes. On the other hand, the average shared time between a rider and a driver up to a transit station and average transit time using objective $Z_2$ is more than $Z_1$. The average walking time of riders using both objectives is almost similar to without a transfer, the walking component of the transit trip should be between the destination and closest transit station only. The waiting time is negligible which shows only a few optimal trips have a waiting transfer, which is assigned a higher weight in the algorithm. \\

	\begin{table}[]
		\caption{Static simulation results}
		\label{tab:4}
		\resizebox{\textwidth}{!}{%
			\begin{tabular}{|p{5cm}|l|l|l|l|l|l|l|l|l|l|l|l|}		
				\hline
				\textbf{Sample number} & \multicolumn{2}{l|}{\textbf{1}} & \multicolumn{2}{l|}{\textbf{2}} & \multicolumn{2}{l|}{\textbf{3}} & \multicolumn{2}{l|}{\textbf{4}} & \multicolumn{2}{l|}{\textbf{5}} & \multicolumn{2}{l|}{\textbf{Average}} \\ \hline
				\textbf{Number of participants} & \multicolumn{2}{l|}{1,120 D, 1,059 R} & \multicolumn{2}{l|}{1,035 D, 1,144 R} & \multicolumn{2}{l|}{1,096 D, 1,083 R} & \multicolumn{2}{l|}{1,092 D, 1,087 R} & \multicolumn{2}{l|}{1,099 D, 1,080 R} & \multicolumn{2}{l|}{1,088 D, 1,097 R} \\ \hline
				\textbf{Run time of Algorithm \ref{alg:sbtsp} (min)} & \multicolumn{2}{l|}{7.55} & \multicolumn{2}{l|}{7.46} & \multicolumn{2}{l|}{7.64} & \multicolumn{2}{l|}{7.47} & \multicolumn{2}{l|}{8.4} & \multicolumn{2}{l|}{7.704} \\ \hline
				\textbf{Number of feasible matches found} & \multicolumn{2}{l|}{22,397} & \multicolumn{2}{l|}{10,415} & \multicolumn{2}{l|}{9,936} & \multicolumn{2}{l|}{10,021} & \multicolumn{2}{l|}{8,924} & \multicolumn{2}{l|}{12,338} \\ \hline
				\textbf{Number of riders and drivers in feasible matches} & \multicolumn{2}{l|}{364 R, 604 D} & \multicolumn{2}{l|}{218 R, 344 D} & \multicolumn{2}{l|}{178 R, 299 D} & \multicolumn{2}{l|}{182 R, 334 D} & \multicolumn{2}{l|}{187 R, 346 D} & \multicolumn{2}{l|}{263 R, 386 D} \\ \hline
				\textbf{Objective} & $Z_1$ & $Z_2$ & $Z_1$ & $Z_2$ & $Z_1$ & $Z_2$ & $Z_1$ & $Z_2$ & $Z_1$ & $Z_2$ & $Z_1$ & $Z_2$ \\ \hline
				\textbf{Run time of optimization program} (sec) & 0.1 & 0.12 & 0.08 & 0.08 & 0.07 & 0.08 & 0.07 & 0.15 & 0.07 & 0.09 & 0.09 & 0.1 \\ \hline
				\textbf{Optimal number of matches found} & 319 & 316 & 192 & 192 & 155 & 154 & 166 & 166 & 168 & 166 & 200 & 198 \\ \hline
				\textbf{Total vehicle-hrs savings} & 107.4 & 122.47 & 55.97 & 67.03 & 48.81 & 57.15 & 48.14 & 55.97 & 47.17 & 58.7 & 61.5 & 71.6 \\ \hline
				\textbf{Average driver detour time (min)} & 10.12 & 7.65 & 11.58 & 8.82 & 11.39 & 8.48 & 11.78 & 9.08 & 11.95 & 8.29 & 11.36 & 9.2 \\ \hline
				\textbf{Shared time between driver and rider (min)} & 29.54 & 29.9 & 27.86 & 28.1 & 29.54 & 29.44 & 28.36 & 28.24 & 28.12 & 28.52 & 28.68 & 34.4 \\ \hline
				\textbf{Average transit time (min)} & 6.06 & 6.41 & 6.44 & 6.59 & 6.59 & 7.64 & 5.43 & 5.76 & 5.29 & 6.32 & 5.29 & 6.32 \\ \hline
				\textbf{Average walking time (min)} & 10.75 & 10.85 & 10.1 & 11 & 11 & 10.85 & 10.24 & 10.01 & 10.57 & 10.46 & 10.532 & 10.452 \\ \hline
				\textbf{Average waiting time (min)}& 0.5 & 0.63 & 0.59 & 0.61 & 0.61 & 0.64 & 0.58 & 0.54 & 0.463 & 0.4 & 0.5486 & 0.574 \\ \hline
			\end{tabular}
		}
	\end{table}

	\FloatBarrier
    To closely examine different components of individual driver and rider-trips, we created a stacked bar chart for the average value of different travel time components (Figure \ref{fig:tt}). Figure \ref{fig:tt}a shows the driver trips, and Figure \ref{fig:tt}b shows the rider-trips. A particular ID in both panels represents a match using objective $Z_2$. This match ID can be compared in both plots to see the relationship between different components of the trip time. As we can see drivers spent most of the time driving a rider to a station. Drivers also spent a significant amount of time driving from their origin to a rider's origin. The driving time from station drop off location to drivers' destination is low as compared to the other two components of the drivers' itinerary. The average driver detour time of  9.2 minutes (Table~\ref{tab:4}) shows that while maximizing the vehicle-hours savings, drivers are assigned to riders whose drop-off transit stations are close to drivers' destinations. \\
    
       Looking at rider itineraries, for most of the riders, the drive time from the origin location to the transit station is a significant component of the trip followed by in-vehicle time and then walking time. Walking time is higher in most of the itineraries because of the unavailability of the actual destination location in the data due to which a walking component from the centroid of TAZ to alighting stop is specified. We can observe that most of the riders are assigned a trip without any walking and waiting transfer to the other routes due to their higher weight in the algorithm. About 31\% of the riders were found to have one or two transfers on their trip.

	\begin{figure}[H]
		\hfill
		\begin{subfigure}[t]{\textwidth}            
			\includegraphics[width=0.66\textheight]{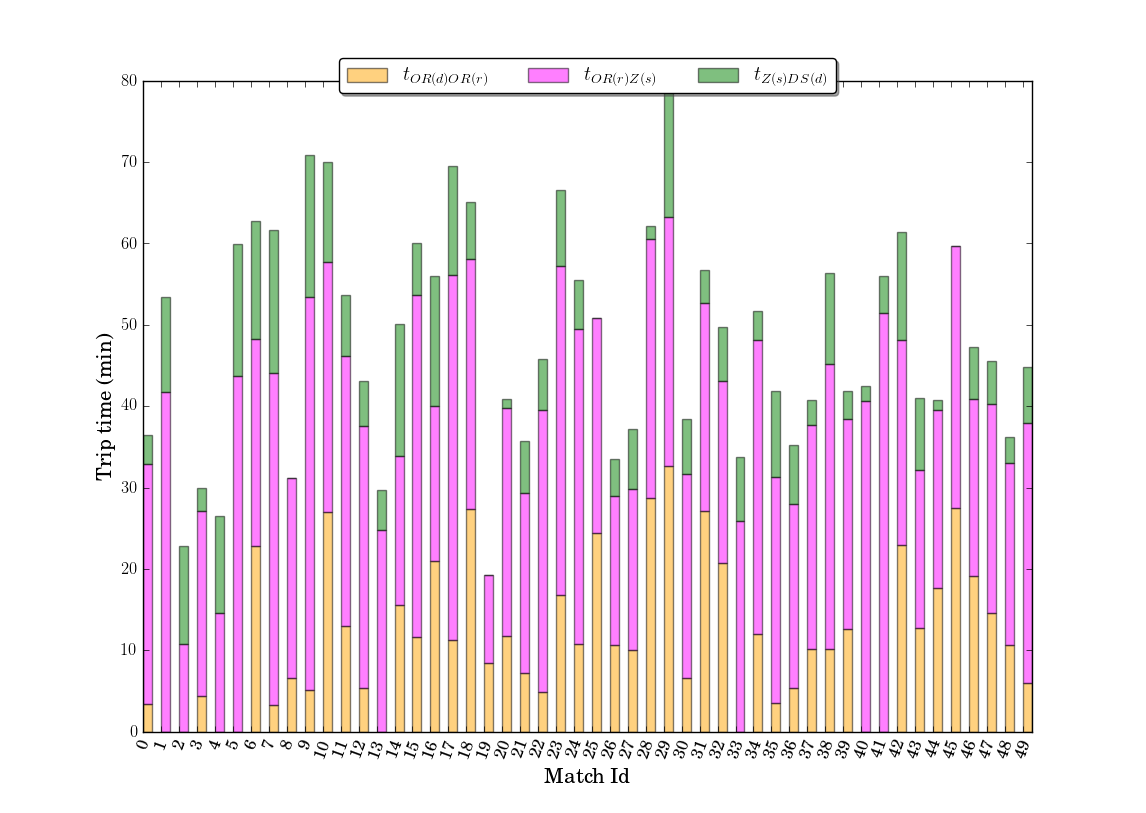}        
			\caption{Driver trip components}
		\end{subfigure}
		\begin{subfigure}[t]{\textwidth}            
			\includegraphics[width=0.66\textheight]{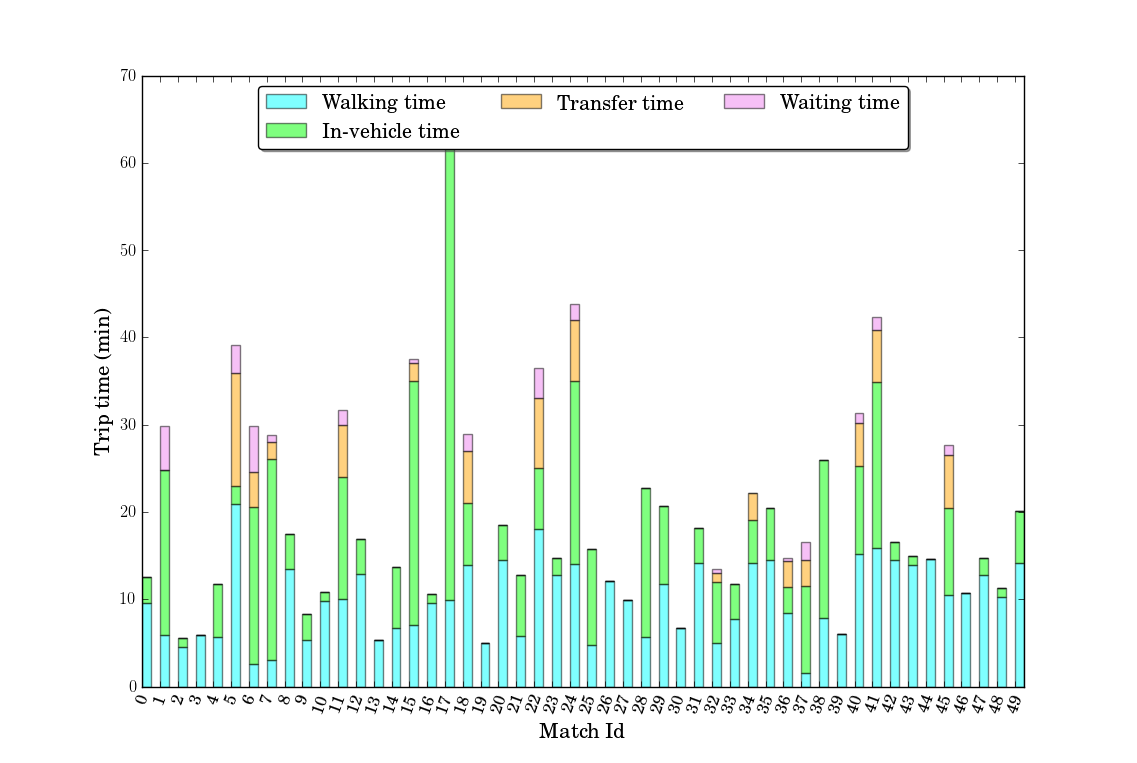}    
			\caption{Riders trip components}
		\end{subfigure}
		\caption{Different components of travel time for riders and drivers}
		\label{fig:tt}
	\end{figure}

	\subsection{Sensitivity to time flexibility}
    To better understand the effect of time flexibility, we performed a sensitivity analysis on the time flexibility of the drivers and riders. We kept the participation rate 1\% and a driver-rider ratio equal to 1.0, similar to previous tests. For varying rider and driver time flexibility (25 \% to 150\%), the results are compiled for computational time, number of feasible matches, number of optimal matches, and vehicle-hrs savings using both objectives. Figure~\ref{fig:run}a shows the effect of varying time flexibility on the computational time of Algorithm~\ref{alg:sbtsp}. With more time flexibility in rider and driver travel time, the runtime of Algorithm~\ref{alg:sbtsp} increases. The increase in computational time with an increase in the drivers' flexibility is due to an increase in the number of feasible rider matches. The increase in rider time flexibility can significantly increase the computational time due to a higher number of nodes in space-time prism of riders resulting in a higher number of feasible matches. The effect of rider time flexibility on computational time is more than driver time flexibility. Figure~\ref{fig:run}b shows the number of feasible matches found for an increase in rider and driver time flexibility. In this figure, we can observe an abrupt increase in the number of feasible matches with the increase in rider and driver time flexibility. This is due to the increase in the number of feasible transit stops and transit trip itineraries.

	\begin{figure}[h!]
		\begin{subfigure}[t]{0.50\textwidth}            
			\includegraphics[width=0.42\textheight]{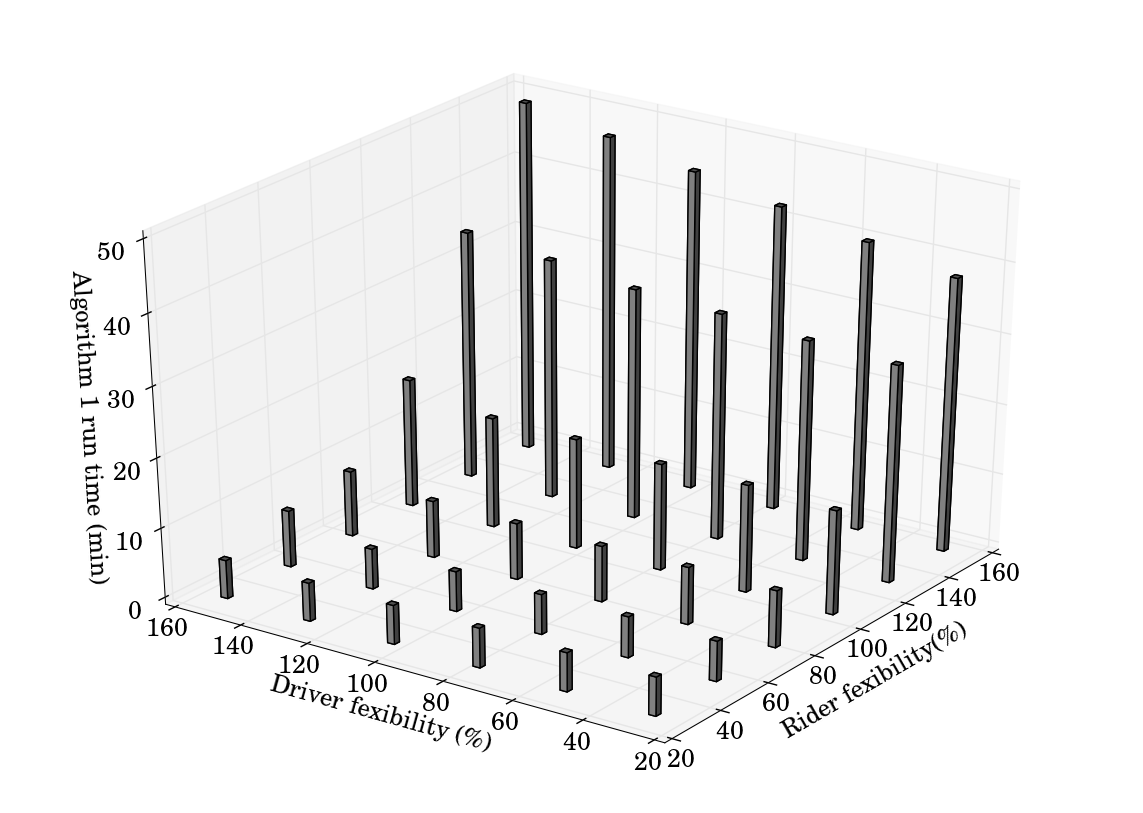}        
			\caption{Run time of Algorithm~\ref{alg:sbtsp}}
		\end{subfigure}
		\hfill
		\begin{subfigure}[t]{0.50\textwidth}            
			\includegraphics[width=0.42\textheight]{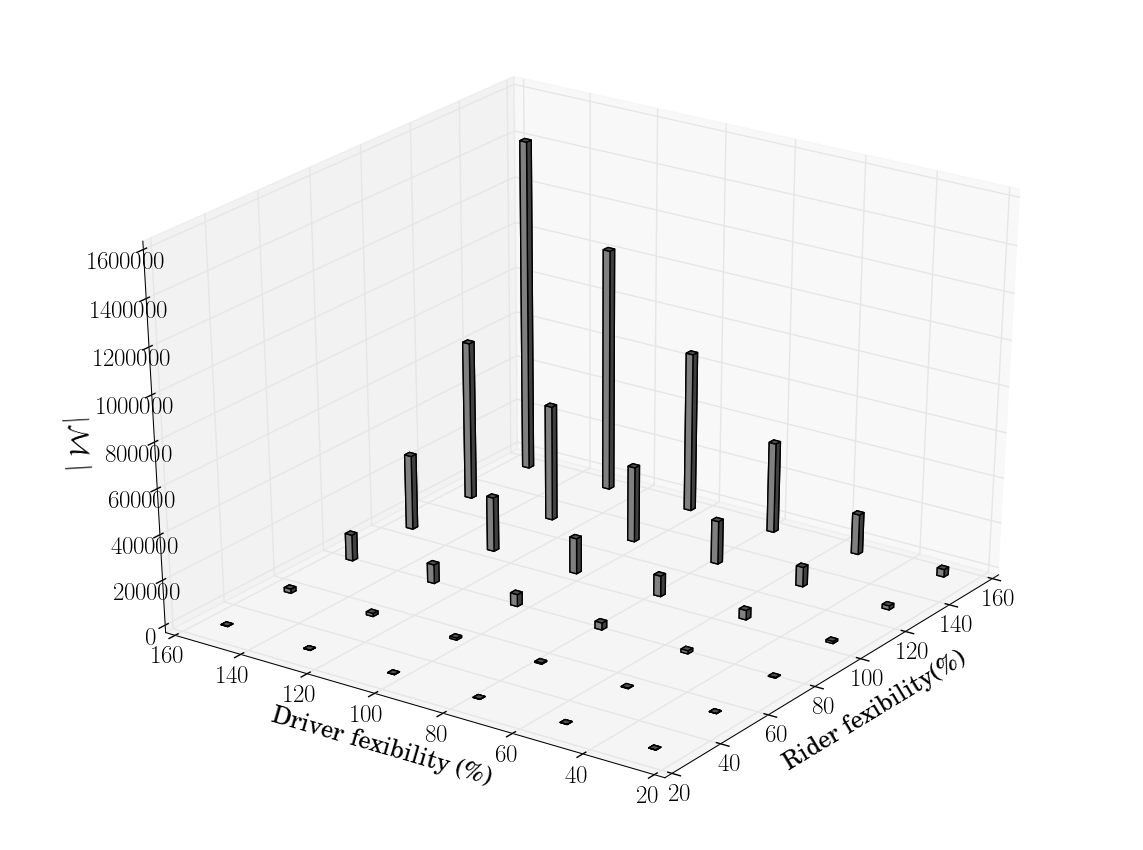}    
			\caption{Number of feasible matches}
		\end{subfigure}
		\caption{The effect of time flexibility of riders and drivers on a) run time of Algorithm~\ref{alg:sbtsp} and b) number of feasible matches found}
		\label{fig:run}
	\end{figure}
	
    Figure~\ref{fig:opt} and Figure~\ref{fig:vhs} show the effect of varying rider and driver time flexibility on the number of optimal matches found and vehicle-hrs savings using different objectives. In general, the number of optimal matches found increase with an increase in the time flexibility of both riders and drivers. As expected, the number of optimal matches are more in case of objective $Z_1$ in comparison to $Z_2$. Similarly, the veh-hrs savings in case of objective $Z_2$ is higher in comparison to $Z_1$. Interestingly, the effect of rider flexibility on the number of optimal matches found is more in comparison to driver time flexibility. However, in the case of veh-hrs savings, the driver time flexibility has more effect on objective $Z_1$, and rider time flexibility has more effect on objective $Z_2$.

	\begin{figure}[h!]
		
		\begin{subfigure}[t]{0.50\textwidth}            
			\includegraphics[width=0.36\textheight]{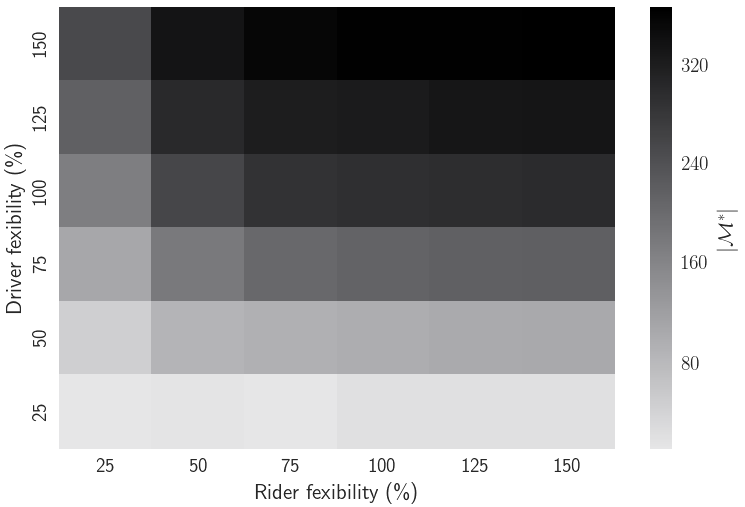}        
			\caption{Objective $Z_1$}
		\end{subfigure}
		\hfill
		\begin{subfigure}[t]{0.50\textwidth}            
			\includegraphics[width=0.36\textheight]{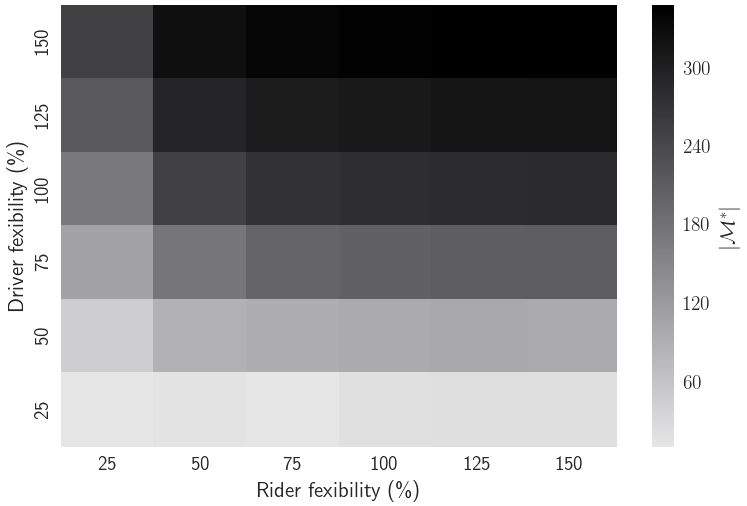}        
			\caption{Objective $Z_2$}
		\end{subfigure}    
		\caption{The effect of time flexibility of riders and drivers on optimal number of matches}
		\label{fig:opt}
	\end{figure}
	
	\begin{figure}[h!]
		\begin{subfigure}[t]{0.50\textwidth}            
			\includegraphics[width=0.36\textheight]{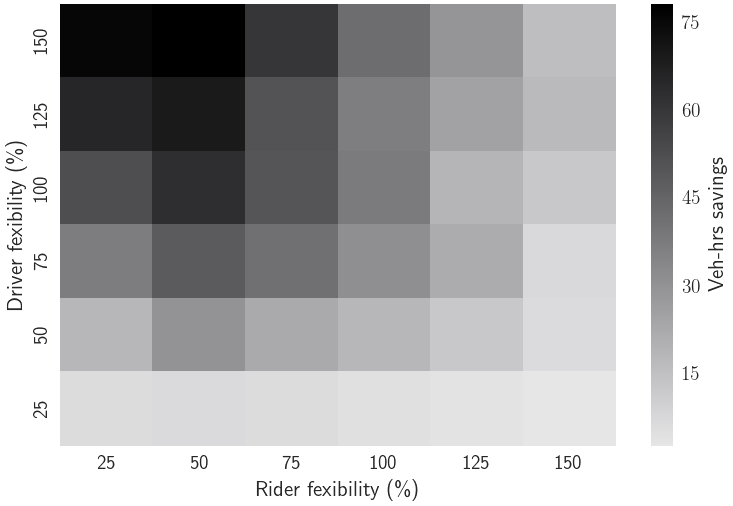}    
			\caption{Objective $Z_1$}
		\end{subfigure}        
		\hfill
		\begin{subfigure}[t]{0.50\textwidth}            
			\includegraphics[width=0.36\textheight]{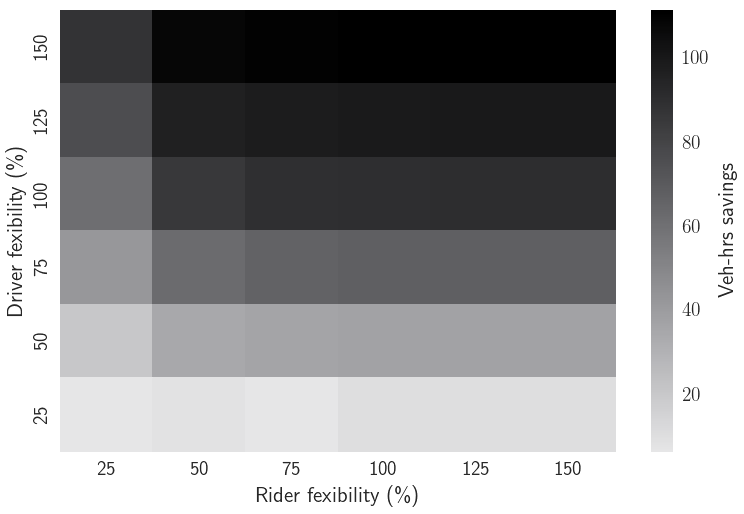}        
			\caption{Objective $Z_2$}
		\end{subfigure}
		\caption{The effect of time flexibility of riders and drivers on veh-hrs savings}
		\label{fig:vhs}
	\end{figure}

	\subsection{Sensitivity to participation rate}
    Using 150\% time flexibility for riders, 50\% time flexibility for drivers, and the driver-rider ratio of 1.0, results for different participation rates varying from 0.5\% to 2\% are stated in Table~\ref{tab:5}. With increasing participation rate, run time of Algorithm~\ref{alg:sbtsp}, the number of feasible matches, the percentage of drivers and riders matched, and veh-hrs savings increase. With a 2\% participation rate, the amount of veh-hrs savings can be as high as 230.8 veh-hrs using objective $Z_2$. This experiment shows that a higher participation rate can result in batter matching results. 
	\begin{table}[]
		\caption{Sensitivity to participation ratio}
		\label{tab:5}
		\resizebox{\textwidth}{!}{%
			\begin{tabular}{|l|l|l|l|l|l|l|l|l|}
				\hline
				Participation \% &  $\vert \mathcal{M}\vert $ & \multicolumn{2}{l|}{\% drivers matched} & \multicolumn{2}{l|}{\% riders matched} & \multicolumn{2}{l|}{veh-hrs savings} \\ \hline
				&    & $Z_1$ & $Z_2$ & $Z_1$ & $Z_2$ & $Z_1$ & $Z_2$ \\ \hline
				0.5  & 32,377 & 22.44 & 21.91 & 24.28 & 23.71 & 25.53 & 36.85 \\ \hline
				1 &  131,405 & 28.09 & 27.45 & 28.64 & 27.99 & 65.68 & 93.81 \\ \hline
				2 &  676,671 & 30.68 & 30.21 & 30.09 & 28.55 & 161.06 & 230.48 \\ \hline
			\end{tabular}
		}
	\end{table}
	\subsection{Sensitivity to driver-rider ratio}
	To see how the number of riders and drivers in the system affect results, we performed a sensitivity analysis on the driver-rider ratio. The time flexibility was kept 80\% for riders and 50\% for drivers and the participation rate was kept 1\%. The driver-rider role in the data was assigned randomly varying from 0.5 to 4.0. As we can see in Figure~\ref{fig:run2} that with more number of riders in the system, the runtime of Algorithm~\ref{alg:sbtsp} and number of feasible matches, $\vert \mathcal{M} \vert $ is higher. The highest value of both quantities was observed when the number of drivers and riders are equal in the system.\\
	
	\begin{figure}[h!]
		\begin{subfigure}[t]{0.50\textwidth}            
			\includegraphics[width=0.35\textheight]{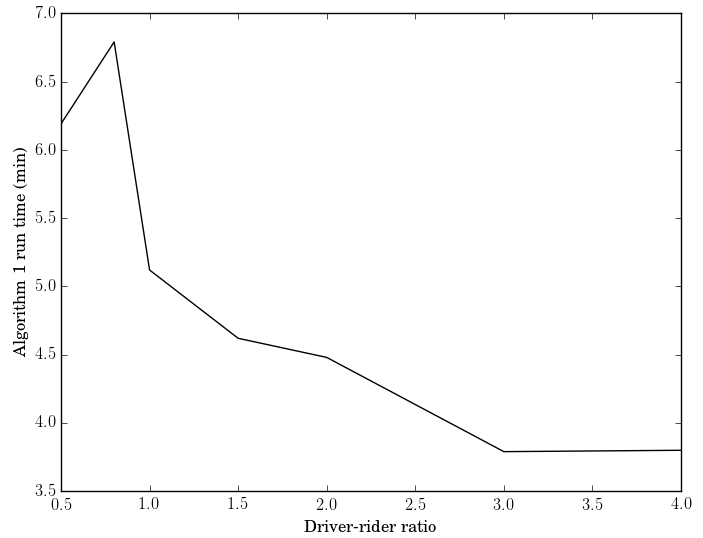}        
			\caption{Run time of Algorithm~\ref{alg:sbtsp}}
		\end{subfigure}
		\hfill
		\begin{subfigure}[t]{0.50\textwidth}            
			\includegraphics[width=0.36\textheight]{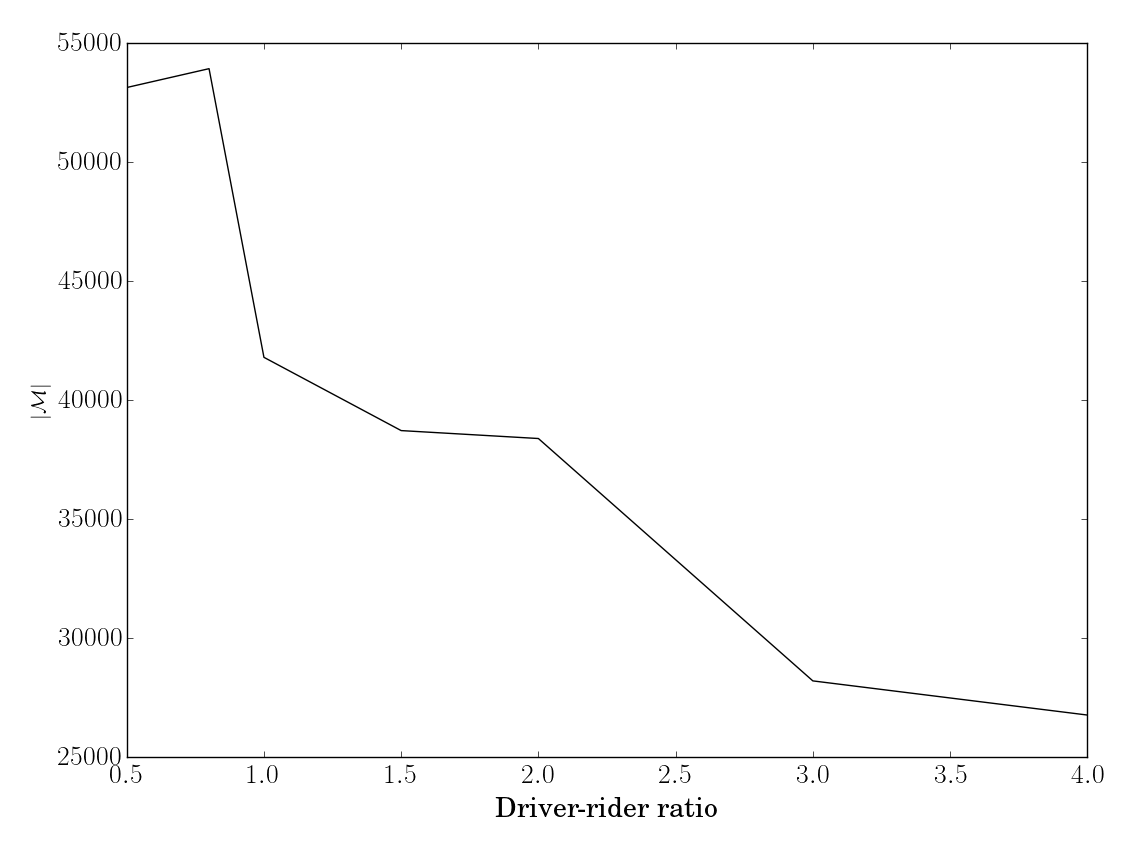}    
			\caption{Number of feasible matches}
		\end{subfigure}
		\caption{The effect of driver-rider ratio on a) run time of Algorithm~\ref{alg:sbtsp} and b) number of feasible matches found}
		\label{fig:run2}
	\end{figure}

	\begin{figure}[h!]
		
		\begin{subfigure}[t]{0.50\textwidth}            
			\includegraphics[width=0.36\textheight]{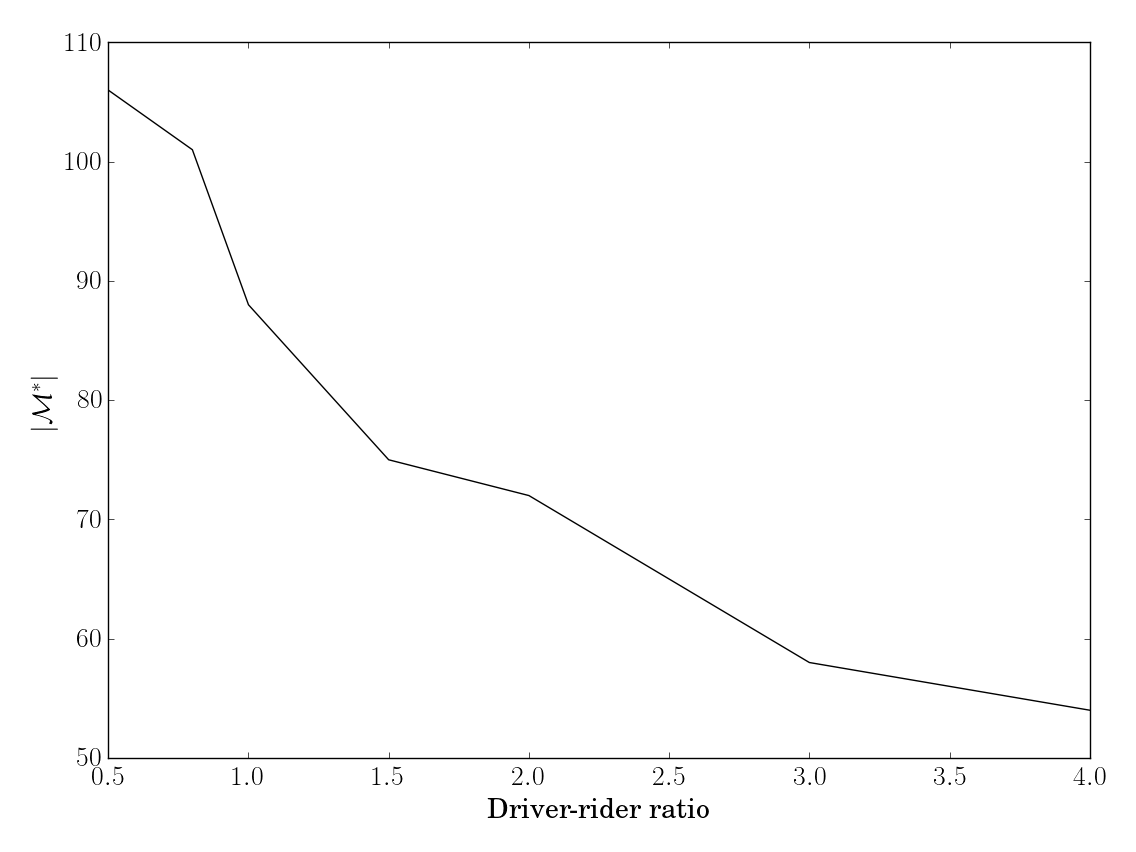}        
			\caption{Objective $Z_1$}
		\end{subfigure}
		\hfill
		\begin{subfigure}[t]{0.50\textwidth}            
			\includegraphics[width=0.34\textheight]{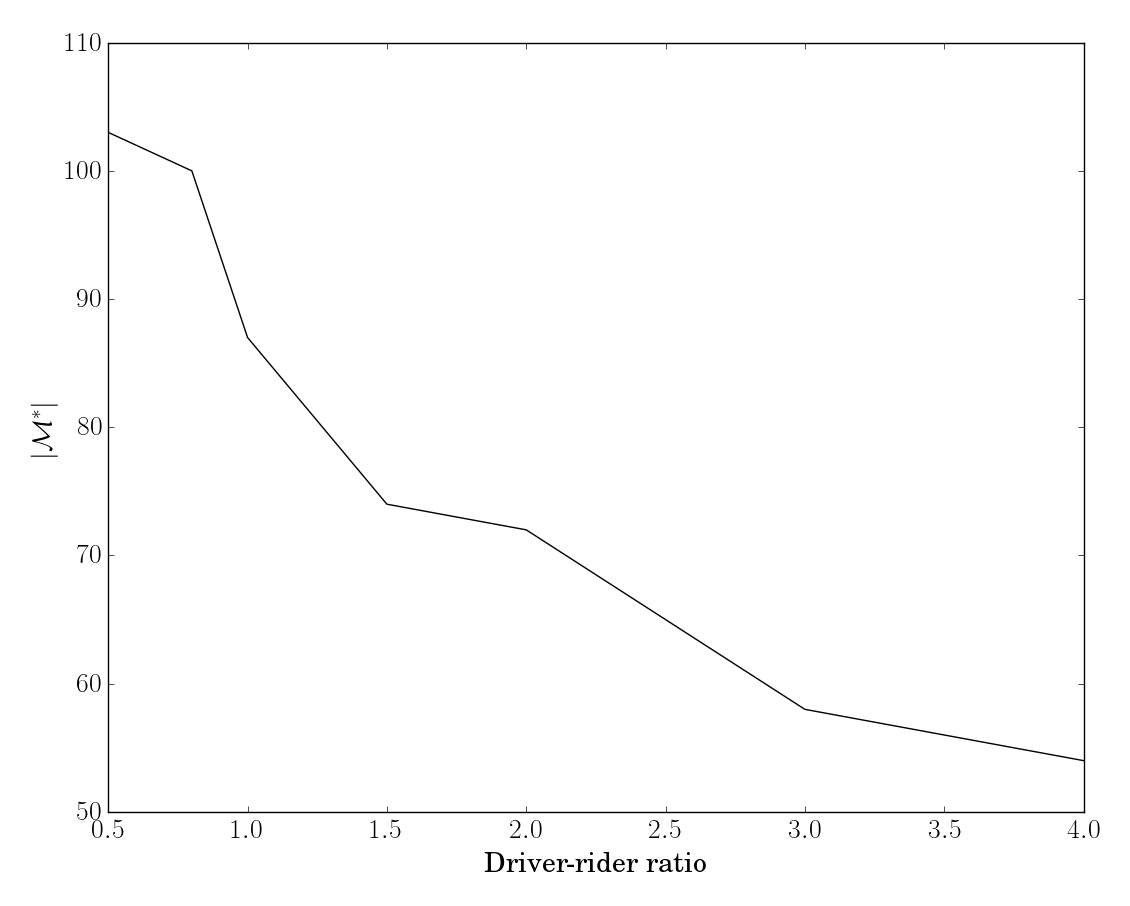}        
			\caption{Objective $Z_2$}
		\end{subfigure}    
		\caption{The effect of driver-rider ratio on optimal number of matches found}
		\label{fig:opt2}
	\end{figure}
	
    A similar trend can be observed in case of the optimal number of matches and veh-hrs savings (Figure~\ref{fig:opt2} and Figure~\ref{fig:vhs2}). With a higher number of riders in the system, more riders can be matched resulting in higher matching rate and higher veh-hrs savings. Figure~\ref{fig:opt2} shows both objectives result in an almost similar matching rate. However, objective $Z_2$ results in significantly higher veh-hrs savings in comparison to objective $Z_1$(Figure~\ref{fig:vhs2}).\\

	\begin{figure}[h!]
		\begin{subfigure}[t]{0.50\textwidth}            
			\includegraphics[width=0.37\textheight]{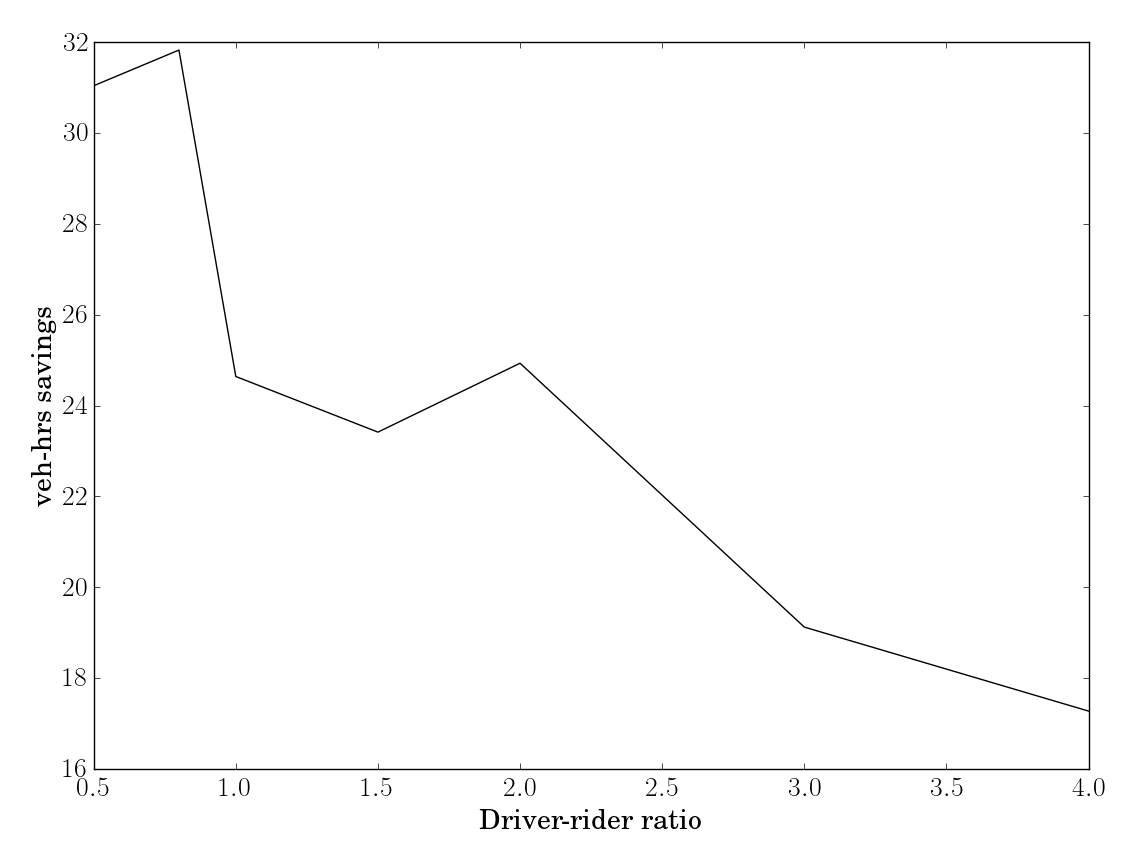}    
			\caption{Objective $Z_1$}
		\end{subfigure}        
		\hfill
		\begin{subfigure}[t]{0.50\textwidth}            
			\includegraphics[width=0.37\textheight]{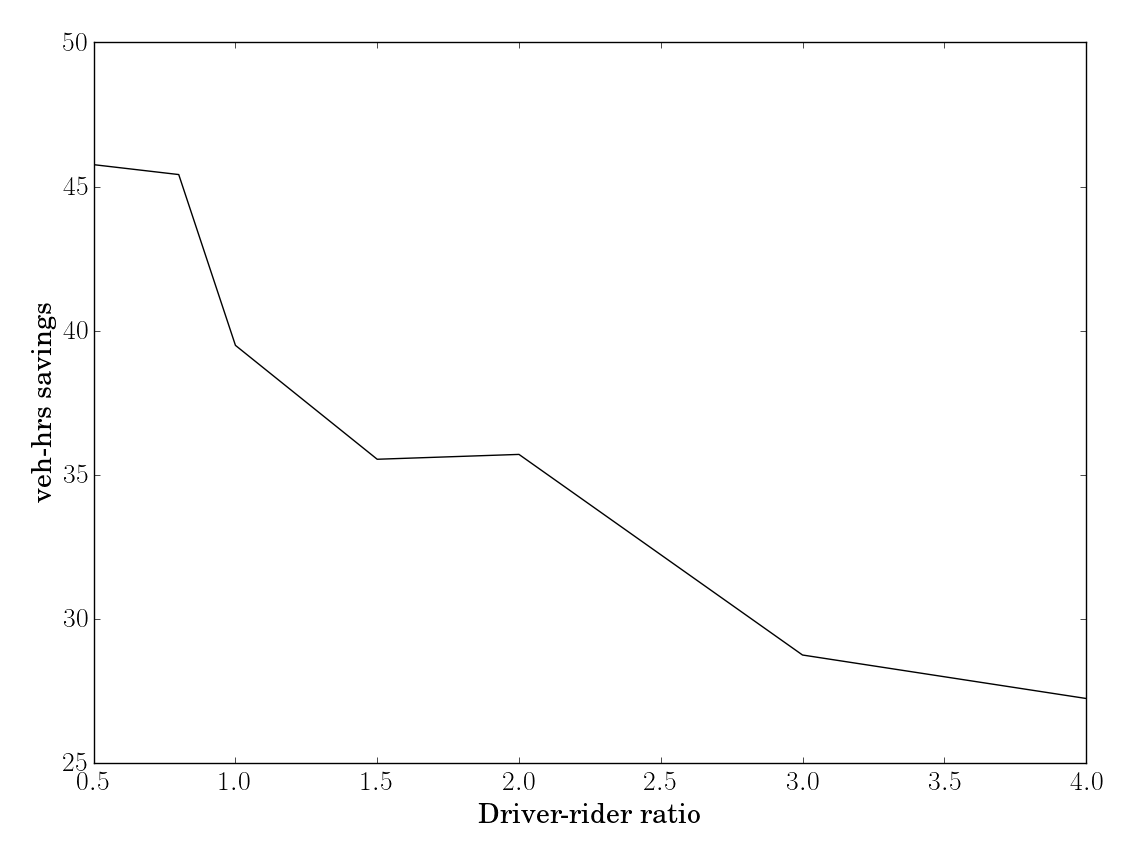}        
			\caption{Objective $Z_2$}
		\end{subfigure}
		\caption{The effect of driver-rider ratio on veh-hrs savings}
		\label{fig:vhs2}
	\end{figure}
	
	The effect of driver-rider ratio on the value of transit travel time components of matched participants is negligible after driver-rider ratio equal to 1.0. Overall, the average shared time between riders and drivers has a higher value in comparison to other components (Figure~\ref{fig:tt2}).
	\begin{figure}[h!]
		\begin{subfigure}[t]{0.50\textwidth}            
			\includegraphics[width=0.39\textheight]{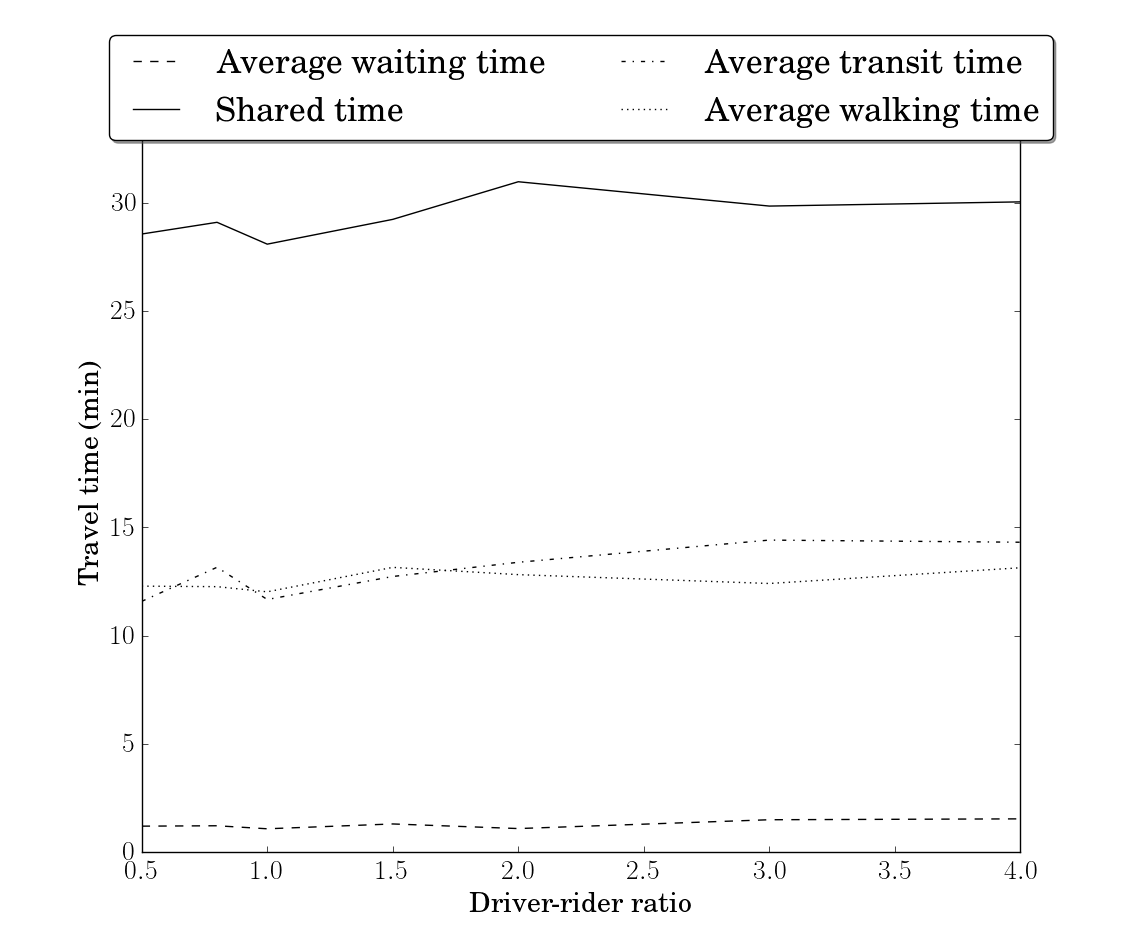}    
			\caption{Objective $Z_1$}
		\end{subfigure}        
		\hfill
		\begin{subfigure}[t]{0.50\textwidth}            
			\includegraphics[width=0.3952\textheight]{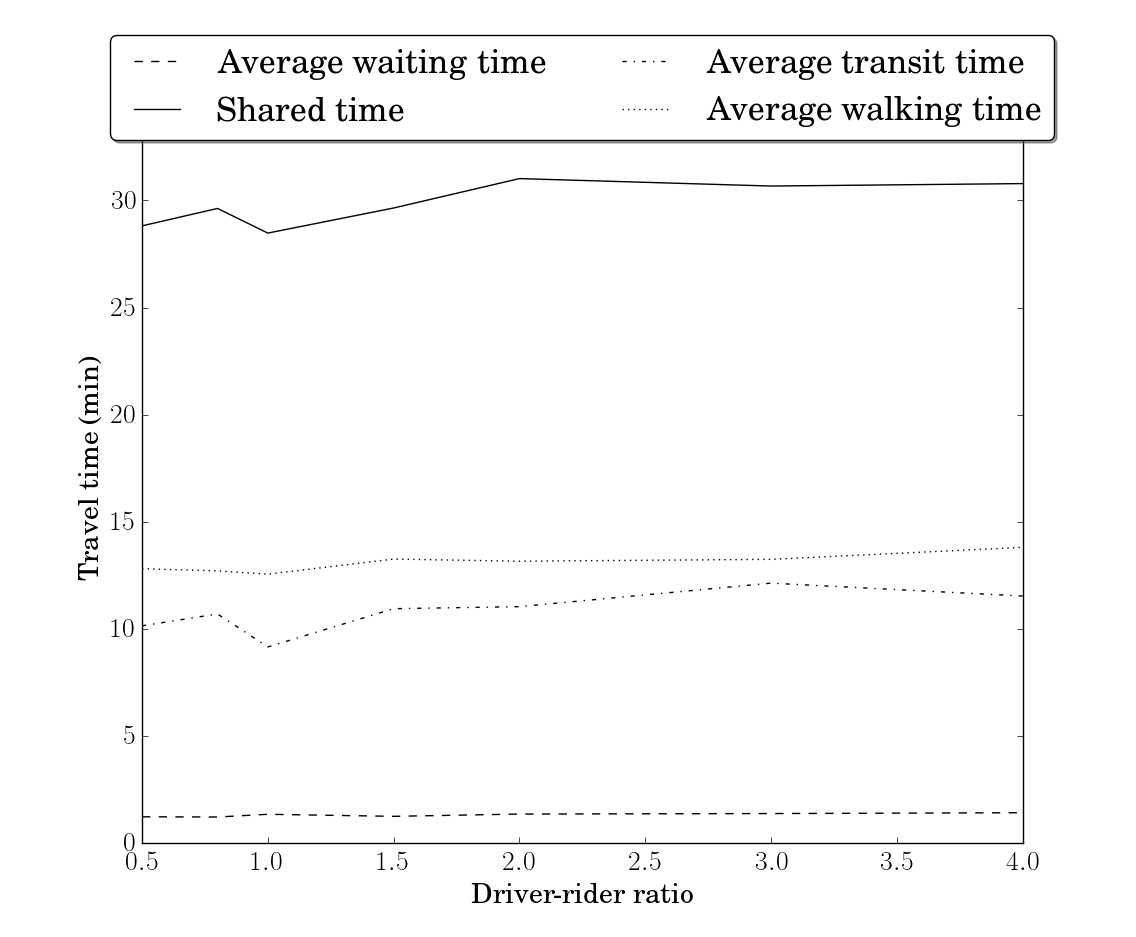}        
			\caption{Objective $Z_2$}
		\end{subfigure}
		\caption{The effect of driver-rider ratio on different time components of rider itinerary}
		\label{fig:tt2}
	\end{figure}

		\subsection{Combining transit-based ridesharing with stand-alone ridesharing}
		In \cref{sec:prelim}, we presented a hypothesis that combining the transit-based ridesharing with stand-alone ridesharing would improve the matching rate and overall savings in veh-hrs. The matching rate is defined as the number of matches divided by the average number of riders and drivers in the system. To get insights into this hypothesis, we run simulations in three different settings:
	\begin{enumerate}
	    \item     \textit{Stand-alone ridesharing (RS)}: In this setting, we find matches between riders and drivers, in which case, the complete trip of a rider is covered by the driver. We use the method proposed by \citet{Agatz2011} to compute the feasible matches and then solve the optimization program (\ref{eq:10}) to find optimal matches.
	
	\item     \textit{Stand-alone transit-based ridesharing (TRS)}: In this setting, we run Algorithm~\ref{alg:sbtsp} to find feasible matches and solve the optimization program (\ref{eq:10}) to find optimal matches.
	
	\item \textit{Combining stand-alone ridesharing and transit-based ridesharing (RS + TRS)}: In this setting, we allow drivers to cover the rider's complete trip. We find feasible matches using both Algorithm~\ref{alg:sbtsp}s and the method proposed by \citet{Agatz2011}, and solve the optimization program (\ref{eq:10}) to find optimal matches.
	\end{enumerate}
    For the driver-rider ratio equal to 1, the results are presented for all of the above settings in Table \ref{tab:6}. For 1035 drivers, and 1144 riders in the system, the optimal matches found in three different settings are 562, 192, and 575 respectively. Combining the RS with TRS increases the matching rate in the case of both objectives ($Z_1$ and $Z_2$). However, the savings in the case of $Z_2$ is significantly higher in comparison to $Z_1$. If we look at the results, we can see that the average savings per match by transit-based ridesharing setting are 21 veh-min in comparison to 14 veh-min in case of stand-alone ridesharing. The average savings is calculated by dividing the veh-hrs saving by the number of final matches. This shows that involving transit trips can significantly help in reducing congestion. Moreover, with better results, by combining both types of ridesharing settings, we have more confidence in the gain from adopting the proposed system and promoting transit.

	\begin{table}[]
		\caption{Comparing matching results for three different setting of ridesharing}
		\label{tab:6}
		\resizebox{\textwidth}{!}{%
			\begin{tabular}{lllll}
				\cline{3-5}
				& \multicolumn{1}{l|}{}                                                     & \multicolumn{1}{l|}{RS}               & \multicolumn{1}{l|}{TRS}              & \multicolumn{1}{l|}{RS + TRS}              \\ \cline{2-5} 
				\multicolumn{1}{l|}{}                                    & \multicolumn{1}{l|}{No of participants}                                   & \multicolumn{1}{l|}{1,035 D, 1,144 R} & \multicolumn{1}{l|}{1,035 D, 1,144 R} & \multicolumn{1}{l|}{1,035 D, 1,144 R}      \\ \cline{2-5} 
				\multicolumn{1}{l|}{}                                    & \multicolumn{1}{l|}{Number of riders and drivers in the feasible matches} & \multicolumn{1}{l|}{925R 716 D}       & \multicolumn{1}{l|}{218R 344 D}       & \multicolumn{1}{l|}{936R, 720D}            \\ \cline{2-5} 
				\multicolumn{1}{l|}{}                                    & \multicolumn{1}{l|}{}                                                     & \multicolumn{1}{l|}{}                 & \multicolumn{1}{l|}{}                 & \multicolumn{1}{l|}{}                      \\ \hline
				\multicolumn{1}{|l|}{\multirow{7}{*}{Maximize savings}}  & \multicolumn{1}{l|}{Final matches found}                                  & \multicolumn{1}{l|}{562}              & \multicolumn{1}{l|}{192}              & \multicolumn{1}{l|}{575 (123 TRS, 452 RS)} \\ \cline{2-5} 
				\multicolumn{1}{|l|}{}                                   & \multicolumn{1}{l|}{Matching rate}                                        & \multicolumn{1}{l|}{51 \%}            & \multicolumn{1}{l|}{18 \%}            & \multicolumn{1}{l|}{53 \%}                 \\ \cline{2-5} 
				\multicolumn{1}{|l|}{}                                   & \multicolumn{1}{l|}{Vehicle-hrs savings}                                  & \multicolumn{1}{l|}{138}              & \multicolumn{1}{l|}{67}               & \multicolumn{1}{l|}{149}                   \\ \cline{2-5} 
				\multicolumn{1}{|l|}{}                                   & \multicolumn{1}{l|}{Average driver ride time (min)}                       & \multicolumn{1}{l|}{37.80}             & \multicolumn{1}{l|}{45.83}            & \multicolumn{1}{l|}{37.63}                 \\ \cline{2-5} 
				\multicolumn{1}{|l|}{}                                   & \multicolumn{1}{l|}{Average rider ride time (min)}                        & \multicolumn{1}{l|}{22.40}             & \multicolumn{1}{l|}{47.10}             & \multicolumn{1}{l|}{26.97}                 \\ \cline{2-5} 
				\multicolumn{1}{|l|}{}                                   & \multicolumn{1}{l|}{Driver detour time (min)}                             & \multicolumn{1}{l|}{7.66}             & \multicolumn{1}{l|}{8.82}             & \multicolumn{1}{l|}{7.47}                  \\ \cline{2-5} 
				\multicolumn{1}{|l|}{}                                   & \multicolumn{1}{l|}{Shared time(min)}                                     & \multicolumn{1}{l|}{22.40}             & \multicolumn{1}{l|}{28.10}             & \multicolumn{1}{l|}{22.55}                 \\ \cline{1-1} \cline{3-5} 
				\multicolumn{2}{l}{}                                                                                                                 &                                       &                                       &                                            \\ \hline
				\multicolumn{1}{|l|}{\multirow{7}{*}{Maximize matching}} & \multicolumn{1}{l|}{Final matches found}                                  & \multicolumn{1}{l|}{583}              & \multicolumn{1}{l|}{192}              & \multicolumn{1}{l|}{593 (104 TRS, 489 RS)} \\ \cline{2-5} 
				\multicolumn{1}{|l|}{}                                   & \multicolumn{1}{l|}{Matching rate}                                        & \multicolumn{1}{l|}{53 \%}            & \multicolumn{1}{l|}{18 \%}            & \multicolumn{1}{l|}{55 \%}                 \\ \cline{2-5} 
				\multicolumn{1}{|l|}{}                                   & \multicolumn{1}{l|}{Vehicle-hrs savings}                                  & \multicolumn{1}{l|}{109.73}           & \multicolumn{1}{l|}{55.34}            & \multicolumn{1}{l|}{114}                   \\ \cline{2-5} 
				\multicolumn{1}{|l|}{}                                   & \multicolumn{1}{l|}{Average driver ride time (min)}                       & \multicolumn{1}{l|}{39.52}            & \multicolumn{1}{l|}{48.86}            & \multicolumn{1}{l|}{39.62}                 \\ \cline{2-5} 
				\multicolumn{1}{|l|}{}                                   & \multicolumn{1}{l|}{Average rider ride time (min)}                        & \multicolumn{1}{l|}{20.43}            & \multicolumn{1}{l|}{46.15}            & \multicolumn{1}{l|}{23.90}                  \\ \cline{2-5} 
				\multicolumn{1}{|l|}{}                                   & \multicolumn{1}{l|}{Driver detour time (min)}                             & \multicolumn{1}{l|}{9.14}             & \multicolumn{1}{l|}{11.85}            & \multicolumn{1}{l|}{9.29}                  \\ \cline{2-5} 
				\multicolumn{1}{|l|}{}                                   & \multicolumn{1}{l|}{Shared time(min)}                                     & \multicolumn{1}{l|}{20.43}            & \multicolumn{1}{l|}{27.92}            & \multicolumn{1}{l|}{20.62}                 \\ \hline
			\end{tabular}
		}
	\end{table}

	\subsection{An experiment on dynamic ridesharing using rolling horizon policy}\label{sec:rhe}
    To deal with the dynamics of a transit-based ridesharing system, we presented a rolling horizon algorithm in \cref{sec:rh}. In this section, we prepare a simulation environment to see how the rolling horizon approach works. We assumed a time step, $\sigma = $ of 5 minutes for this experiment. Depending on the demand, $\sigma$ plays an important role in deciding if one iteration of Algorithm~\ref{alg:rh} can be solved within a time interval of $\sigma$. For this experiment, we consider driver-rider ratio equal to 1.0, rider, and driver flexibility to be equal to 150\% and 50\% respectively, and 1\% participation rate during the morning peak. Figure~\ref{fig:rh} shows simulation results from 5:00 AM to 9:00 AM. The computational time results are presented in  Figure~\ref{fig:rh}a. The maximum computational time of 4.23 min was observed at 7:20 AM when there were 412 participants in the system. This shows that Algorithm~\ref{alg:rh} can run successfully for a 1\% participation rate with given $\sigma$. The optimization time remains a fraction of second for each iteration and most of the time was spent in finding feasible matches using Algorithm~\ref{alg:sbtsp}. \\

	\begin{figure}[h!]
		\begin{subfigure}[t]{\textwidth}			
			\includegraphics[width=0.75\textheight]{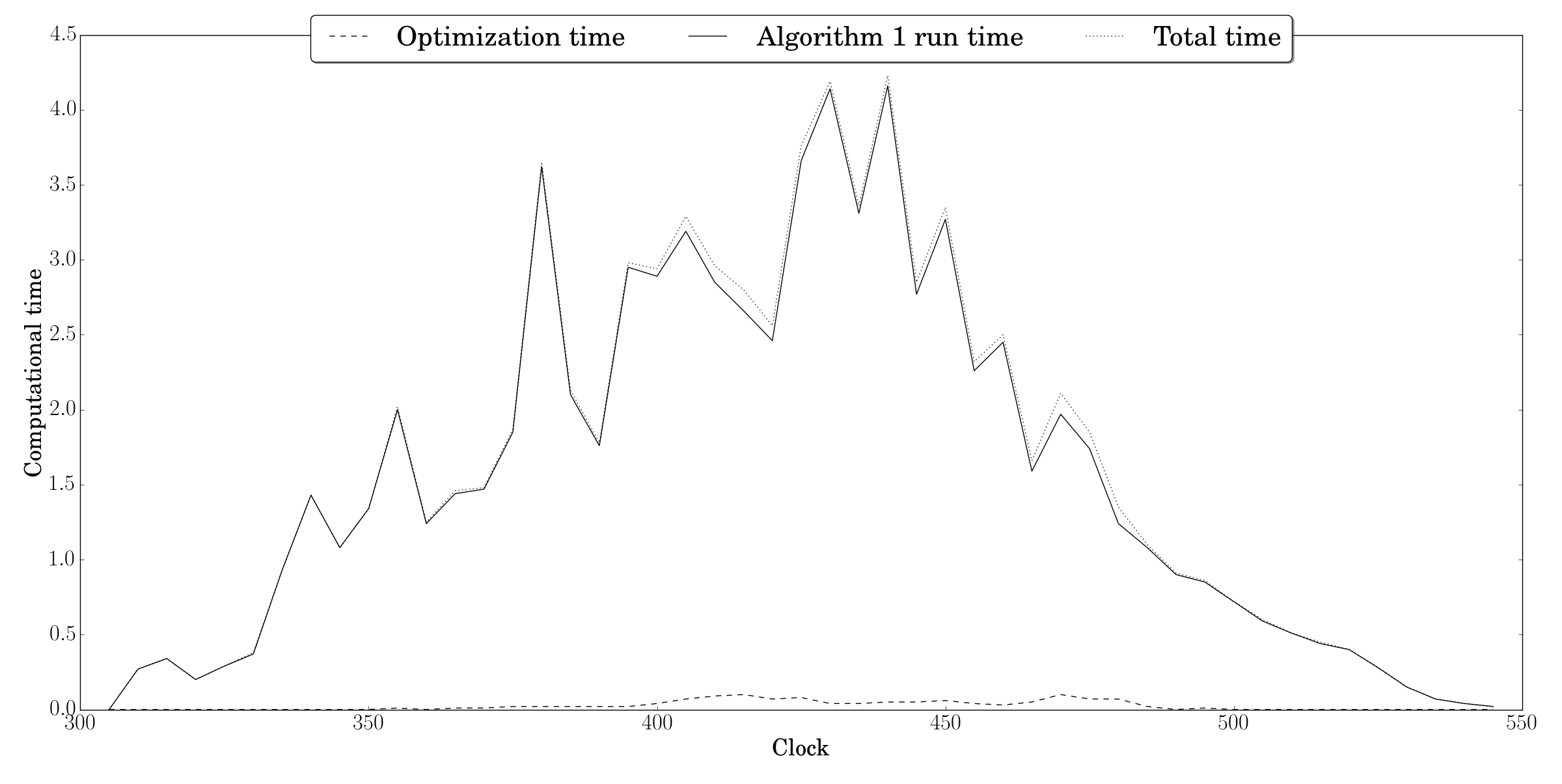}	
			\caption{Time spent during the simulation in each time interval}
		\end{subfigure}		
		
		\begin{subfigure}[t]{\textwidth}			
			\includegraphics[width=0.75\textheight]{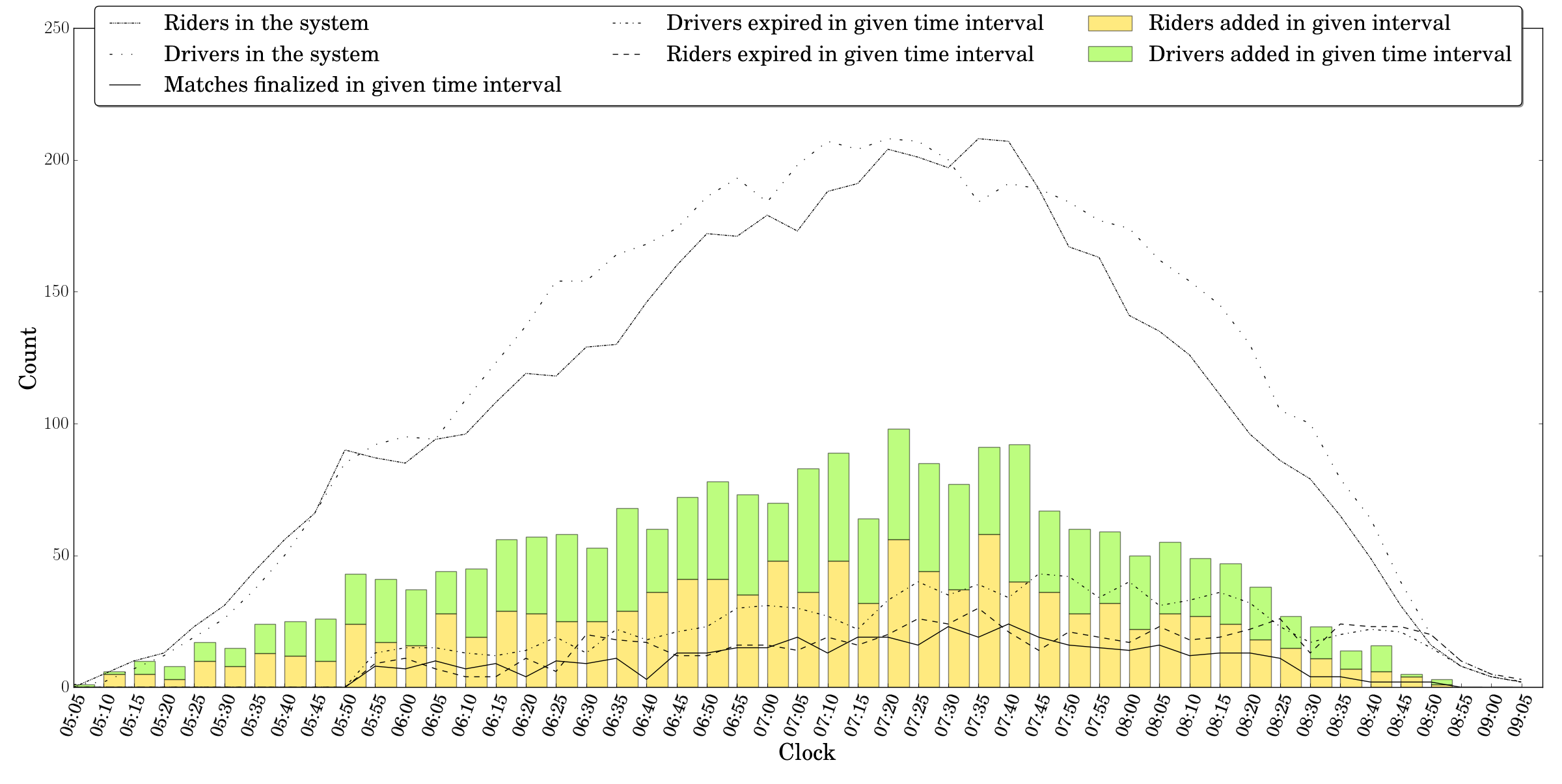}		
			\caption{Distribution of riders and drivers in the system}
		\end{subfigure}
		\caption{Algorithm~\ref{alg:rh} simulation results}
		\label{fig:rh}
	\end{figure}

    Figure~\ref{fig:rh}b shows the number of riders and drivers present and added in each interval. We can observe a peak demand between 7:00 AM-8:00 AM. The number of matches finalized in each time step is almost equal to the number of rider and driver requests being expired due to the unavailability of match satisfying given constraints. Overall, about 54\% of the participants found a successful match during the simulation. This example shows that the given algorithm can handle uncertainties arising in this dynamic matching problem.

	\section{Discussion}\label{sec:disc}
    The proposed research attempts to solve the transit FMLM problem using ridesharing under certain assumptions mentioned in \cref{sec:assp}. We discuss the possible implications of these assumptions and possibilities to resolve those issues. First, we assumed that transit vehicles have unlimited capacity. However, the congested transit systems often face the problem where passengers fail to board the train due to limited capacity. This will result in extra waiting time, which was not considered when finding feasible matches. A time-dependent origin-destination flow matrix (\cite{Nassir2011, Kumar2018b, Kumar2019}) can help in evaluating a failure to board probability to incorporate in the model. Second, it was assumed that the travel time on both road and transit network is reliable. However, the travel time experienced by passengers is often subject to uncertainty. This uncertainty is more prominent in the case of bus transit service. One way to address this problem is to create matches in the framework which allow drop-off only at stops served by high-frequency routes. The higher the frequency of transit service, lesser is the waiting time. Another way is to formulate this problem as an online shortest path problem, which accounts for the uncertainty in the travel time and provides optimal policies for each possible realization (\cite{Waller2002, Khani2019}). Future work is required to develop an algorithm that creates feasible matches based on online information.  Third, better calibration of service time based on the dwell time models is needed. 
	
	\section{Conclusions and future research}\label{sec:conc}
    This research proposes ridesharing as a promising solution to the transit FMLM problem. To integrate ridesharing with fixed-route transit, we developed a labeling algorithm to find feasible matches between drivers and riders. The algorithm uses a schedule-based transit shortest path to generate optimal itineraries for riders, which captures various complexities of a transit network such as precise waiting time, in-vehicle time, transfer time, and access time. To reduce the size of the network search, the algorithm uses the concept of space-time prism (STP) which provides constraints on rider and driver movement based on the available time budget and location. Using a matching optimization program, the riders and drivers are matched together up to the first mile of the rider trip. Two different objectives are considered for the problem; maximizing the total number of matches and maximizing the total vehicle-hrs savings. Using simulation experiments on real data from a large scale network, we found that the proposed ridesharing program can save a significant amount of veh-hrs spent in the system. The number of matches found using both objectives were found to be close to each other. However, maximizing veh-hrs savings can save a lot more veh-hrs than maximizing the total number of matches. This shows that maximizing the matching rate is not a good objective. We also observed different trip components of drivers and riders and found that driving from riders' origin to drop-off stations is a significant component of both riders and drivers' itinerary. This is due to less time flexibility in riders' schedule as transit travel time is generally higher than the corresponding driving time.  We also performed a sensitivity analysis on the participation rate, time flexibility, and driver-rider ratio. With an increase in time flexibility and participation rate, the runtime of the algorithm increases, producing a higher number of feasible matches. An equal number of riders and drivers in the system maximizes the matching rate. By combining the stand-alone ridesharing with the transit-based ridesharing program, we obtained a better matching rate and veh-hrs savings. We also proposed an algorithm to solve the given ridesharing matching problem in real-time using a rolling horizon approach. For a 1\% participation rate among travelers facing the FMLM problem, a time step of 5 minutes is adequate for solving the given problem on a regular desktop computer. Furthermore, computational time can be further reduced by using parallelization techniques. Using simulation experiments, we observed that about 46\% of riders and drivers did not find a suitable match due to the lack of a feasible partner or time flexibility. Twin Cities sub-urban region being large and sparse has a lower number of trips generated in those regions. The lower matching rate is due to less number of requests. With a higher number of requests (e.g., in denser urban areas such as Manhattan region), the matching rate can be higher.\\
	
    The current research can be expanded in multiple directions. To improve the matching rate of the current program, there is a need to develop a multiple-matching algorithm. One possibility is to sequentially grow the number of feasible riders for a given driver by checking different time constraints. It is also challenging to develop such an algorithm for riders facing both first and last mile problems which means there is no transit access at both ends of a transit trip. There is a need to develop an algorithm that can generate several ridesharing modal choices for users.

	\bibliographystyle{plainnat}
	\bibliography{library.bib}

\end{document}